\renewcommand{\eq}[1]{Eq.~(\ref{eq:#1})}
\newtheorem{proposition}{Proposition}[section]
\Crefname{appendix}{App.}{Apps.}
\Crefname{equation}{Eq.}{Eqs.}
\Crefname{figure}{Fig.}{Figs.}
\newtheorem{theorem}{Theorem}
\newtheorem{lemma}{Lemma}
\newtheorem{definition}{Definition}
\newcommand{\mjc}[1]{\textcolor{teal}{#1}}
\newcommand{\sqms}{\affiliation{Superconducting and Quantum Materials System Center (SQMS), Batavia, IL, 60510, USA.}}
\newcommand{\fnal}{\affiliation{Fermi National Accelerator Laboratory, 
Batavia, IL 60510, USA}}
\begin{document}
 \preprint{FERMILAB-PUB-25-0709-SQMS-T, NT@UW-25-19}
 \title{Magic State Distillation using Asymptotically Good Codes on Qudits}
 \author{Michael J.~Cervia\,
 \orcidlink{0000-0002-2962-3055}
 }
 \email{cervia@uw.edu}
 \affiliation{Department of Physics, University of Washington, Seattle, WA, 98195, USA}
%  \author{Erik J. Gustafson\,\orcidlink{0000-0001-7217-5692}}
% \email{egustafson@usra.edu}
% \sqms
% \affiliation{Quantum Artificial Intelligence Laboratory (QuAIL),
% NASA Ames Research Center, Moffett Field, CA, 94035, USA}
% \affiliation{USRA Research Institute for Advanced Computer Science (RIACS), Mountain View, CA, 94043, USA}
\author{Henry Lamm\,\orcidlink{0000-0003-3033-0791}}
\email{hlamm@fnal.gov}
\sqms
\fnal
\author{Diyi Liu\,\orcidlink{0000-0002-0996-7686}}
\email{diyiliu@lbl.gov}
\affiliation{Applied Mathematics and Computational Research Division, Lawrence Berkeley National Laboratory, Berkeley, CA 94720, USA}
\author{Edison M. Murairi\,\orcidlink{0000-0002-1639-6308}}
\email{emurairi@fnal.gov}
\sqms
\fnal
\author{Shuchen Zhu\,\orcidlink{0000-0002-1240-2002}}
\email{shuchen.zhu@duke.edu} 
\affiliation{Department of Mathematics, Duke University, Durham, NC 27708, USA}
\affiliation{Duke Quantum Center, Duke University, Durham, NC 27701, USA}
\date{\today}

%%%%%%%%%%%%%%%%%%%%%%%%%%%%%%%%%%%%%%%%%%%%%%%%%%%%%%%
\begin{abstract}
Qudits offer the potential for low-overhead magic state distillation, although previous results for asymptotically good codes have required qudit dimension $q\gg 100$ or code length $\mathcal{N}\gg 100$.  These parameters far exceed experimental demonstrations of qudit platforms, and thus motivate the search for better codes.  
Using a novel lifting procedure, we construct the first family of good triorthogonal codes on the $\mathbb{F}_{2^{2m}}$ alphabet with $m \geq 3$ that lies above the Tsfasman-Vladut-Zink bound. These codes yield a family of asymptotically good quantum codes with transversal CCZ gates, enabling constant space overhead magic state distillation with qudit dimension as small as $q=64$. Further, we identify a promising code with parameters $[[42,14,6]]_{64}$. Finally, we show that a distilled $\ket{CCZ}_{2^{2m}}$ can be reduced to a $\ket{CCZ}_{2^n}$ state for arbitrary $n$ with a constant-depth Clifford circuit of at most 9 computational basis measurements, 12 single-qudit and 9 two-qudit Clifford gates.
\end{abstract}
%%%%%%%%%%%%%%%%%%%%%%%%%%%%%%%%%%%%%%%%%%%%%%%%%%
\maketitle

\section{Introduction}\label{sec:introduction}
Quantum computation with higher-dimensional systems, or qudits, has emerged as a compelling alternative to qubit-based architectures~\cite{wang2020qudits, KiktenkoNikolaevaEtAl2025, nguyen2024empowering}.
The enlarged Hilbert space per quantum register leads to higher effective connectivity by replacing entangling gates with single-qudit rotations~\cite{Nikolaeva:2021rhq,Mansky:2022bai,Jankovic:2023bnw,Murairi:2024xpc,Keppens:2025pfo}. This alternative permits lower gate fidelities for the same algorithmic fidelity~\cite{Nikolaeva:2022wmq,Iiyama:2024uos,Champion:2024ufp,kiktenko2020scalable, roy2023two,PRXQuantum.2.020311, zhu2024unified}, which in turn would yield higher error thresholds for the prospect of fault-tolerant computation. This promise has driven algorithmic research in qudits across multiple fields, including
numerical optimization~\cite{Tancara:2024vck,Bravyi:2020pur}, 
condensed matter~\cite{PhysRevA.110.062602,choi2017dynamical,BassmanOftelie:2022hfz,Young:2023zxu}, and particle physics~\cite{Gustafson:2021qbt,Gonzalez-Cuadra:2022hxt,Popov:2023xft,Calajo:2024qrc,Illa:2024kmf, Zache:2023cfj,Meth:2023wzd,Kurkcuoglu:2024cfv,Joshi:2025pgv}.

Complementing these theoretical efforts, recent experiments on various physical platforms have demonstrated both the capability and near-term potential of large-dimension qudit quantum processors. In trapped-ion systems, researchers have realized qudits with dimensions $q=7$~\cite{Ringbauer:2021lhi}, $q=8$~\cite{Low:2023dlg}, and $q=13$~\cite{Shi:2025vvq}. Transmon systems have further realized $q=8$~\cite{Champion:2024wlh} and $q=12$~\cite{Wang:2024xbz}, illustrating that the circuit quantum electrodynamics naturally extends to larger local dimensions. 
Molecular spins ($q=8$~\cite{Mezzadri:2023ige}), solid-state spin systems 
($q=10$~\cite{Omanakuttan:2021ffn}, $q=16$~\cite{deFuentes:2023pbp}) 
and cold-atom experiments (reaching $q=25$~\cite{Dong:2023xhv}) further demonstrate how diverse hardware can support many-level encoding with promising coherence and control properties. Notably, superconducting radio-frequency (SRF) cavities have demonstrated extremely large single-mode Hilbert spaces ($q=20$ with high fidelity~\cite{Kim:2025ywx} and up to $q=100$ with lower fidelity~\cite{Deng:2023uac}). Altogether, these experiments point toward architectures where large local dimensions are available. 

% Fault tolerance and need for MSD
Fault-tolerant computations on these architectures requires new quantum error correcting codes and protocols. Magic state distillation (MSD) is commonly proposed to enable universal computation by preparing high-fidelity resource states from several low-fidelity ones~\cite{BravyiKitaev05}. The resulting resource state is used to implement a non-Clifford gate to a desired accuracy. The number of noisy inputs required to achieve an infidelity $\epsilon$ scales as $\sim\log^{\gamma}(1/\epsilon)$, where the exponent $\gamma$ is called the MSD overhead. 

In Ref.~\cite{PhysRevA.86.052329}, it was shown that a quantum code with a transversal non-Clifford gate may be used to distill the corresponding resource with overhead equal to 
\begin{align}
    \gamma = \frac{\log{(\mathcal{N}/\mathcal{K})}}{\log \mathcal{D}},
\end{align}
where $\mathcal{N}$, $\mathcal{K}$, and $\mathcal{D}$ are the code length, code dimension, and code distance of a quantum code $[[\mathcal{N},\mathcal{K},\mathcal{D}]]_q$ over $q$-dimensional qudit. The authors identified qubit codes achieving $\gamma = 1.585$ and conjectured a lower bound $\gamma \geq 1$. Their construction relied on classical triorthogonal codes. 

Since then, several efforts have been made to construct quantum codes with transversal non-Clifford gates and lower overhead $\gamma$. Quantum Reed–Muller codes over prime dimensions allow implementation of some transversal non-Clifford gates and $\gamma<1$~\cite{Campbell:2012olh,HowardVala2012,Saha:2025frb}.
Another code with $\gamma \leq 0.678$ was found to be achievable for $q \geq 2^{58}$~\cite{HastingsHaah2018}. Further,  Calderbank-Shor-Steane (CSS) codes constructed from Reed-Solomon codes could obtain arbitrary small overhead ($\gamma \to 0$) as $q\rightarrow\infty$~\cite{PhysRevLett.123.070507}. 
Recently, it was demonstrated that constant space overhead and constant overhead are achievable with finite, albeit large, $q\geq 2^6$ and $q \geq 2^{10}$ respectively by constructing \emph{asymptotically good codes} which are families of codes which as the code length increases the encoding rate and relative distance are bounded away from zero~\cite{Wills:2024wid,Nguyen:2024qwg,GolowichGuruswami2024}. %Crucially, the approach in Refs.~\cite{Wills:2024wid,Nguyen:2024qwg} constructed triorthogonal codes over algebraic curves of Garcia-Stichtenoth towers~\cite{Garcia:1995,GARCIA1996248}.

The existence of these codes naturally raises the question of how much more efficient can the codes be. This can partially be addressed by considering  the efficiency of constituent classical codes $[n,k,d]_q$ which can be quantified by how large the encoding rate $R=\frac{k}{n}$ can be for relative distance $\delta=\frac{d}{n}$, usually with respect to two lower bounds.  First is the Gilbert--Varshamov (GV) bound~\cite{6773017,Varshamov1957} provides a non-constructive lower bound based on sphere-packing arguments. 
Remarkably, randomly-drawn linear code yields parameters that saturate the GV bound with high probability~\cite{agcodes-advanced}, and therefore more efficient codes should rely upon additional structure. When $q\geq49$, the Tsfasman--Vladut--Zink (TVZ) bound~\cite{mana.19821090103} is tighter than the GV bound on an interval around $\delta\sim0.4$ which grows larger with $q$. This latter bound is the one we will consider here since for all our codes $q\geq 64$. 
Codes meeting or exceeding the TVZ bound have large $R$ for their $\delta$, 
a feature desirable for optimizing quantum-code performance. While previous works~\cite{Wills:2024wid,Nguyen:2024qwg} did not demonstrate it, Garcia--Stichtenoth towers can yield families of asymptotically good codes that satisfy the 
TVZ bound~\cite{mana.19821090103}; we refer to such codes as TVZ codes. While no triorthogonal TVZ code has been previous identified, TVZ codes with iso-duality (a less-restrictive condition) exist~\cite{chara2025goodisodualagcodestowers}. 

In this work, we present a construction of triorthogonal TVZ codes over  
$q = r^2$, where $r \ge 8$ is a power of two. Our approach is based on a \emph{lifting method}, 
which we prove preserves triorthogonality from a base code. Code lifting is a well-known technique to create new codes from old codes; see e.g. Refs.~\cite{Chara2024,chara2025goodisodualagcodestowers,m-gretchen,DOUGHERTY2005123,10931129,Old_2024}. 
In the context of algebraic geometry (AG) codes, a similar lifting method was employed~\cite{Chara2024,chara2025goodisodualagcodestowers} to establish the existence of iso-dual TVZ codes~\cite{chara2025goodisodualagcodestowers}. 
We adapt and extend the lifting procedure to maintain triorthogonality, and obtain triorthogonal TVZ codes. Thus, we find that despite the strong algebraic constraints of triorthogonality, codes exist with `very good' parameters.  We further show how these classical codes can be used to construct quantum codes with correspondingly improved MSD performance. 

The rest of this article is organized as follows: Sec.~\ref{sec:summary-results} summarizes our results. In Sec.~\ref{sec:basic-notions}, we review the notions of algebraic geometry, defining AG codes and their parameters, with particular focus on triorthogonal codes. 
In Sec.~\ref{sec:classical-code} we provide our construction of triorthogonal TVZ codes, and in Sec.~\ref{sec:quantum-codes} we derive a corresponding family of good quantum codes, explicit examples of which we provide in Sec.~\ref{sec:explicit-codes}. 
Furthermore, in Sec.~\ref{sec:msd}, we show how to obtain $\ket{CCZ}$ magic states for any $2^n$-qudit. 
Finally, we summarize and reflect upon future directions in Sec.~\ref{sec:conclusion}.

\section{Summary of Results}\label{sec:summary-results}

Our approach to developing quantum codes introduces a lifting method that generates larger triorthogonal codes from smaller ones, while preserving the triorthogonality. 
Using our lifting method, we prove the existence of asymptotically good families of triorthogonal codes defined over the alphabet $\mathbb{F}_q$ with $q=2^{2m}$ and $m\geq 3$, improving on the result of Refs.~\cite{Nguyen:2024qwg} and Ref.~\cite{Wills:2024wid} where it was required $m \geq 4$ and $m \geq 5$, respectively. We further show that our codes can exceed the TVZ bound~\cite{mana.19821090103}. 

Using our triorthogonal codes, we have constructed a family of good quantum codes with transversal CCZ gate on qudit $q = 2^{2m}$ where $m \geq 3$, enabling magic state distillation with constant space $\gamma=0$ overhead following e.g. the protocol in Ref.~\cite{Nguyen:2024qwg}. Further, we have identified a low-overhead $\gamma\approx0.613$ code $[[42,14,6]]_{64}$. Finally, we show that we can obtain the $\ket{CCZ}_{2^n}$ state for any $n$ with a constant-depth Clifford circuit given the ability to distill a $\ket{CCZ}_{2^{2m}}$ ($m \geq 3$). This result may be viewed as a generalization of a result in Ref.~\cite{Wills:2024wid} where a $2^{10}$-qudit magic state is converted to qubit $\ket{CCZ}$ states.

\begin{table}[h]
\caption{Asymptotically good triorthogonal classical codes.
\label{tab:agctc}}
\begin{tabular}{ccc}
%\hline
 & \begin{tabular}[c]{@{}c@{}}$q = r^2$\end{tabular} & $\lim_{j \rightarrow \infty} \left(\frac{k_j}{n_j} + \frac{d_j}{n_j}\right)$ \\ \hline\hline
Wills et al.~\cite{Wills:2024wid}\footnote{A different notion of triorthogonality from~\cite[Definition A.3]{Wills:2024wid}.}  & $\geq 2^{10}$  & Not determined  \\ 
Nguyen~\cite{Nguyen:2024qwg}  & $\geq 2^8$  & $\geq \frac{1}{2}$\footnote{Deduced from~\cite[Theorem 3.6]{Nguyen:2024qwg}.}  \\
Present work  & $\geq 2^6$ & $\geq 1 - \frac{2r + 3}{3r(r-1)}$\footnote{Above the TVZ bound, i.e. $1 - 1/(r-1)$ for alphabet size $q = r^2$. }  \\ \hline\hline
\end{tabular}
\end{table}

\begin{theorem}[Informal version of the main results]

(1) Suppose the code $C_{j+1}:=C_\mathcal{L}(D_{j+1},G_{j+1})$ is the lifted code of $C_j := C_{\mathcal{L}}(D_j, G_j)$ in the sense of Def.~\ref{def:lifting}. Then, if $C_j$ is triorthogonal, so is $C_{j+1}$.

(2) For every $r=2^{m}$ with $m\geq 3$ and $q=r^{2}$, there exists a family of classical triorthogonal AG codes $\{C_j\}$ over alphabet size $q$ with parameters $[n_j,k_j,d_j]$ such that

\begin{align}
    \lim_{j \rightarrow \infty} \left(\frac{k_j}{n_j} + \frac{d_j}{n_j}\right) \geq 1 - \frac{2r + 3}{3r(r-1)} > 1 - \frac{1}{(r-1)}.
    \label{eq:tvz-summary}
\end{align}
Note that the latter inequality of Eq.~\eqref{eq:tvz-summary} is the TVZ bound for codes on $\mathbb{F}_{r^2}$. 
We summarize this result and compare with Refs.~\cite{Wills:2024wid,Nguyen:2024qwg} in Tab.~\ref{tab:agctc}. 

(3) These classical codes yield a family of good quantum codes over the alphabet $q = r^2$ for $r = 2^m \geq 8$ with parameters $[[\mathcal{N}_j, \mathcal{K}_j, \mathcal{D}_j]]_q$
\begin{align}
\label{eq:params2}
    \mathcal{N}_j &= r^j (r^2 - r) - \mathcal{K}_j\notag\\
    \mathcal{K}_j &= x_1 \, r^j \left( (r+1) \left\lfloor \frac{r-2}{3} \right\rfloor -2r\right) + x_2 \, v(r,j)\notag\\
    \mathcal{D}_j &\geq (1-x_1)\, r^j \, \left( (r+1) \left\lfloor \frac{r-2}{3} \right\rfloor -2r\right)\notag\\ &\phantom{xx}+ (1 - x_2)\,v(r,j) 
\end{align}
where 
\begin{align}
    v(r,j) := \begin{cases}
        4 r^{\frac{j+1}{2}}\, \text{ if j is odd}\notag\\
        2r^{\frac{j}{2}} \left(1 + r\right)\, \text{ otherwise},
    \end{cases}
\end{align}
the constants $x_1\in(0,1)$ and $x_2\in[0,1]$ are chosen such that %$0 < x < 1$ and that 
$\mathcal{K}_j$ is an integer. 
We compare this result with Refs.~\cite{Wills:2024wid,Nguyen:2024qwg,GolowichGuruswami2024} in Tab.~\ref{tab:agqtc}. 

(4) Finally, given the ability to distill a state $\ket{CCZ}_{2^{2m}}$ (where $m \geq 3$), we can obtain the state $\ket{CCZ}_{2^n}$ for arbitrary $n$ using a constant-depth Clifford circuit with at most nine computational basis measurements, twelve single-qudit, and nine two-qudit Clifford gates.
\end{theorem}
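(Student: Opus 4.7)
The plan is to handle the four parts as a pipeline: first establish the lifting lemma (1), iterate it on a Garcia--Stichtenoth tower to get the classical family (2), pipe that family through a standard CSS-style construction to get the quantum codes (3), and finally design an explicit Clifford gadget for (4). Each part feeds the next, so I would prove them in order.

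For (1), I would work directly with the generators $\mathrm{ev}_{D_j}(f_i)$ for $f_i$ a basis of $\mathcal{L}(G_j)$. Triorthogonality amounts to vanishing of the pairwise sums $\sum_{P\in D_j} f_a(P)f_b(P)$ and the triple sums $\sum_{P\in D_j} f_a(P)f_b(P)f_c(P)$ (as elements of the alphabet), which by the residue theorem translate into a residue condition for the differentials $(f_a f_b)\,\omega$ and $(f_a f_b f_c)\,\omega$ on the base curve. Under the covering $F_{j+1}/F_j$, the lifted evaluation at $D_{j+1}$ is compatible with the trace $\mathrm{Tr}_{F_{j+1}/F_j}$ on residues, and $\mathcal{L}(G_{j+1})$ is arranged so that products of lifted functions still live in a Riemann--Roch space whose corresponding differentials are traced from regular differentials on the base. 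The identity from (1) then follows by pushing the sums through the trace. The delicate point is ensuring that $G_{j+1}$ and $D_{j+1}$ are simultaneously (i) compatible with the cover, (ii) containing enough functions for triorthogonality to be meaningful, and (iii) avoiding the ramification locus so that no boundary terms appear; this is exactly where the Garcia--Stichtenoth geometry is essential.

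For (2) and (3), I would bootstrap from a small triorthogonal base code $C_0$ on a low level of the Garcia--Stichtenoth tower over $\mathbb{F}_{r^2}$, and apply (1) recursively. The parameters $(n_j,k_j,d_j)$ then follow from standard Goppa bounds: $n_j$ from the number of rational places (growing as $r^j(r^2-r)$), $k_j$ from Riemann--Roch applied to $G_j$, and $d_j \ge n_j - \deg G_j$. Choosing $\deg G_j$ near $n_j/3$ (the largest value compatible with triple products lying in $\mathcal{L}(3G_j)\subset\mathcal{L}(nP_\infty)$, using the residue condition at the pole) and inserting the explicit genus of the tower yields the claimed bound $k_j/n_j+d_j/n_j\ge 1-(2r+3)/(3r(r-1))$, which one then checks beats $1-1/(r-1)$ by a direct inequality. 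For (3), I would apply the standard CSS recipe for triorthogonal codes: split the generator matrix into a magic part (of fraction $x_1$) and a check part (of fraction $1-x_1$), lift to $X$- and $Z$-stabilizers, and use (1) to certify transversal $CCZ$. The parameters in \eqref{eq:params2} then follow from direct counting together with a second split (controlled by $x_2$) coming from a residual auxiliary code $v(r,j)$ associated with the ramified places.

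For (4), I would build the Clifford gadget in two stages. When $n\le 2m$, I would view the $2^{2m}$-dimensional qudit as a tensor of $2m$ qubits in the computational basis, couple it to three $2^n$-qudits by controlled-sum gates, and measure out the $2m-n$ spectator bits; by choosing the couplings carefully, the Pauli corrections conditioned on measurement outcomes are Clifford, so the net operation transports $\ket{CCZ}_{2^{2m}}$ to $\ket{CCZ}_{2^n}$ in constant depth. When $n>2m$, I would iterate the converse gadget: given $\ket{CCZ}_{2^a}$ and $\ket{CCZ}_{2^b}$, assemble $\ket{CCZ}_{2^{a+b}}$ using the binomial expansion of $(a_1 2^b+a_2)(b_1 2^b+b_2)(c_1 2^b+c_2)$ modulo $2^{a+b}$; the cross terms decompose into a constant number of controlled-phase couplings which can be supplied by additional copies of the magic state and absorbed by Clifford operations via gate teleportation. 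Totaling the measurements and gates across both stages gives the claimed $(9,12,9)$ count. The main obstacle will be the precise operation-count bookkeeping for the cross-term gate teleportations in the assembly step, where a naive bound overshoots and a careful merging of Clifford conjugations is needed to meet the stated numbers.
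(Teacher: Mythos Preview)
Your outline for (2) and (3) is essentially the paper's construction: build a base code on the rational function field $F_0$ of the Garcia--Stichtenoth tower, lift recursively, and read off the Goppa bounds. One small correction for (3): the function $v(r,j)$ is not an ``auxiliary code associated with the ramified places''; it is simply the genus correction $2(1-g_j)$ appearing in the bound $\mathcal{D}_j \ge \deg G_j - \mathcal{K}_j - (2g_j-2)$, computed from the explicit genus formula for the tower.

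For (1), your residue/trace approach is different from the paper's, and as written it has a gap. You speak of ``products of lifted functions,'' but $\mathcal{L}(G_{j+1})$ is much larger than the pullback of $\mathcal{L}(G_j)$, so verifying triorthogonality only for pulled-back generators would not suffice. The paper sidesteps this entirely by never working at the level of functions: it shows that the \emph{divisor inequality} $G\ge 0$, $3G \le D + (\eta)$ (a sufficient condition for triorthogonality via $C^{\star 2}\subseteq C_{\mathcal L}(D,2G)\subseteq C^\perp$) is preserved under lifting. The one-line reason is the cotrace formula $(\eta_{j+1}) = \mathrm{Con}_{F_{j+1}/F_j}((\eta_j)) + \mathrm{Diff}(F_{j+1}/F_j)$, which gives
\[
3G_{j+1}-D_{j+1}-(\eta_{j+1}) = \mathrm{Con}\bigl(3G_j-D_j-(\eta_j)\bigr) - \mathrm{Diff}(F_{j+1}/F_j) \le 0,
\]
with no reference to individual functions.

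For (4), you are missing the key simplification and this is why your gate count overshoots. The theorem assumes you can distill $\ket{CCZ}_{2^{2m}}$ for \emph{any} $m\ge 3$, so for large $n$ there is no assembly step: if $n\ge 6$ is even, set $m=n/2$ and distill directly; if $n>2$ is odd or $n=4$, set $m=n$ and reduce once from $r^2$ to $r$. Only $n\le 2$ requires iterating the reduction (start at $r=16$, reduce two or three times). The reduction gadget itself (Lemma~\ref{lemma:state-reduction}) uses a normal basis $\{\theta,\theta^r\}$ of $\mathbb{F}_{r^2}/\mathbb{F}_r$ with $\theta+\theta^r=1$, measures three registers, and applies three $CZ^\eta$ corrections plus one multiplication gate $M_\gamma$: $(3,4,3)$ per round. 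Three rounds give exactly $(9,12,9)$, with no binomial cross-terms and no gate teleportation.
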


\begin{table}[h]
\caption{Asymptotically good triorthogonal quantum codes: minimum qudit dimension $q_{min}$, bounds on encoding rate $\mathcal{R}$, and minimum code length $\mathcal{N}_{min}$. Previous works present various state and alphabet reduction procedures that allow for smaller $q$.
\label{tab:agqtc}}
\begin{tabular}{cccc}
%\hline
 & $q_{min}$& $\mathcal{R}$ & $\mathcal{N}_{min}$\\ \hline\hline
Wills et al.~\cite{Wills:2024wid}  & $2^{10}$  & $\frac{5}{118} \leq \mathcal{R} \leq \frac{5}{114} $ & 932,090  \\ 
Nguyen~\cite{Nguyen:2024qwg}  & $2^8$  & $< \frac{1}{4}$\footnote{Deduced from \cite[Corollary]{Nguyen:2024qwg}}  & Not determined\\
Golowich et al.~\cite{GolowichGuruswami2024}  & $2^6$ & $> \frac{1}{100}$ & Not determined\\
Present work  & $2^6$ & $<  \frac{1}{2}  $\footnote{Obtained in the limit $x \rightarrow 1$, although at $x=1$ $\mathcal{D}_j=0$} & $\leq 55$ \\ 
\hline\hline
\end{tabular}
\end{table}

% \begin{table}[h]
% \caption{Summary of asymptotically good triorthogonal quantum codes: minimum qudit dimension $q_{min}$, bounds on encoding rate $\mathcal{R}$, and minimum code length $\mathcal{N}_{min}$. Previous works present various state and alphabet reduction procedures that allow for smaller $q$.
% \label{tab:agqtc}}
% \begin{tabular}{cccc}
% %\hline
%  & $q_{min}$& $\mathcal{R}$ & $\mathcal{N}_{min}$\\ \hline\hline
% Wills et al.~\cite{Wills:2024wid}  & $2^{10}$\footnote{A state reduction procedure from $q = 2^{10}$ to $q=2$ is provided.}  & $\frac{5}{118} \leq \mathcal{R} \leq \frac{5}{114} $ & 932,090  \\ 
% Nguyen~\cite{Nguyen:2024qwg}  & $2^8$\footnote{An alphabet reduction from $q = 2^{2m}$ to $q=2$ is provided.}  & $< \frac{1}{4}$\footnote{Deduced from \cite[Corollary]{Nguyen:2024qwg}}  & Not determined\\
% Golowich et al.~\cite{GolowichGuruswami2024}  & $2^6$\footnote{An alphabet reduction procedure from $q^r$ to $q$ is provided.} & $> \frac{1}{100}$ & Not determined\\
% Present work  & $2^6$\footnote{A state reduction procedure from $q = 2^{2m}$ to $q = 2^n$ is provided.} & $<  \frac{1}{2}  $\footnote{Obtained in the limits $x \rightarrow 1$. Note that $x \neq 1$ because otherwise, the distance lower bound vanishes.} & $< 56$\footnote{In the case $q = 2^6 = 8^2$, our smallest code has length 55.
% } \\ 
% \hline\hline
% \end{tabular}
% \end{table}

%\Diyi{the second inequality is strictly greater?}

\section{Basic notions of Algebraic Geometry codes}
\label{sec:basic-notions}

% \itodo{Michael, fill out and then clean up list of definitions.}

Algebraic geometry (AG) codes are a class of error-correcting codes constructed from the framework of algebraic curves over finite fields. For a detailed review, we point readers toward Refs.~\cite{Chara2024,chara2025goodisodualagcodestowers,Stichtenoth2009}, and we will further adopt notation following Refs.~\cite{Wills:2024wid,Nguyen:2024qwg}. The basic idea is to exploit the structure of a curve — specifically, its rational points and the associated function field — to define codewords. In particular, by evaluating chosen functions at a set of rational points, one obtains linear codes whose parameters can surpass classical bounds. The theory draws on concepts from algebraic geometry to systematically relate the properties of the curve to the performance of the code. This interplay between geometry and coding theory has led to highly efficient classical codes, including ones that achieve the TVZ bound \cite{mana.19821090103}.
Many of the same ideas can be adapted to quantum error-correcting codes by using the CSS construction combined with classical AG codes to achieve high rates and large distances. 
This connection allows algebraic geometry to inform the design of quantum codes, enabling quantum analogues that inherit favorable properties from their classical counterparts. 

\subsection{Algebraic Geometry}

\begin{figure}
    \centering
    \includegraphics[width=\linewidth]{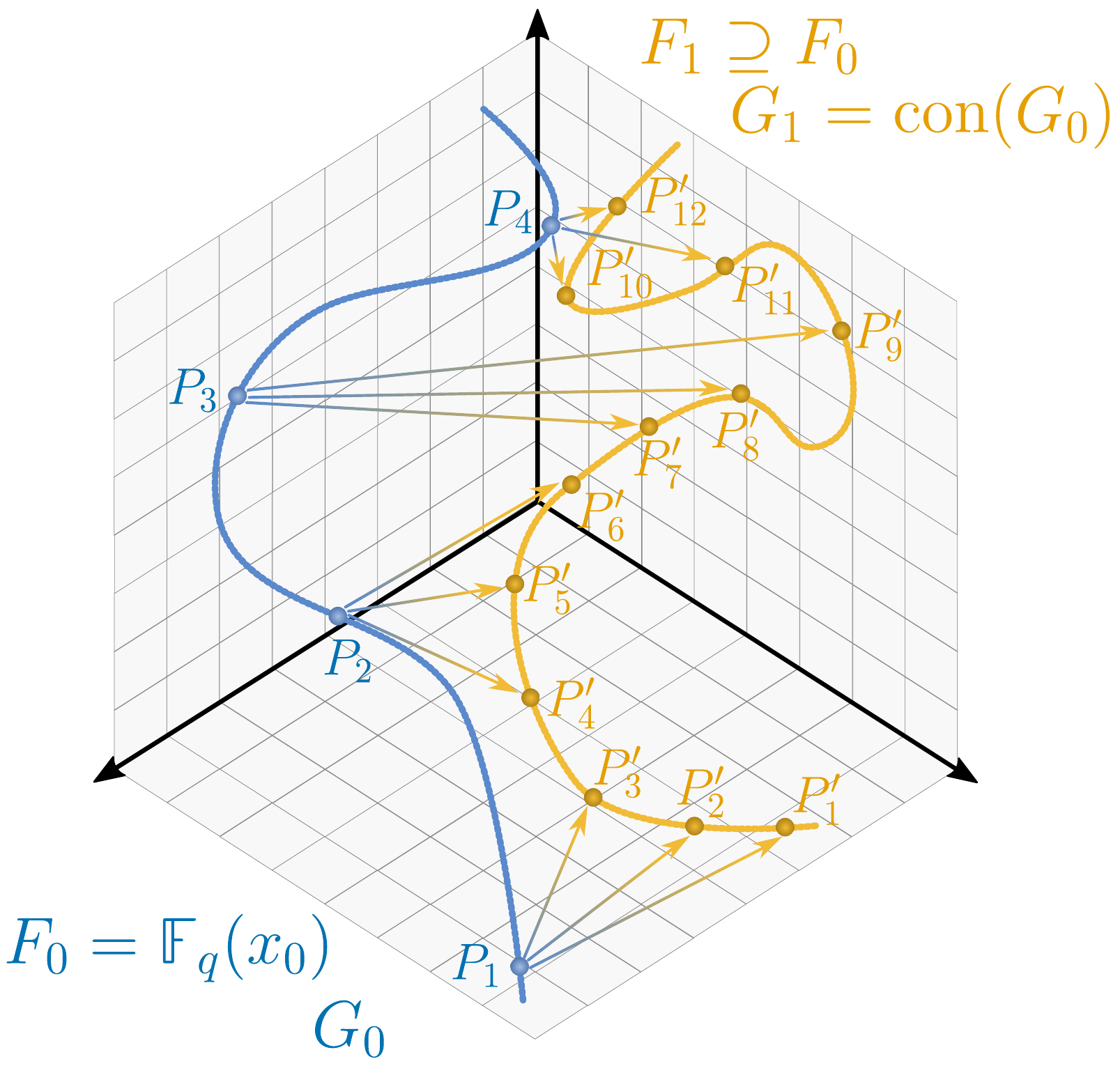}
    \caption{The blue curve corresponds to a function field $F_j$. Then $n$ pairwise distinct points $P_i$ correspond to rational places of $F_j$. The divisor $G_j$ specifies the Riemann-Roch space $\mathcal{L}(G_j)$. The points $P_i$ on the curves and the divisor $G_j$ specify an AG code of which the codewords are the vectors $(f(P_1), {...}, f(P_n))$ where $f \in \mathcal{L}(G_j)$. The yellow curve $F_{j+1}$ is an extension of $F_j$. An arrow $P_i \rightarrow P'_j$ indicates that $P'_j$ lies above $P_i$. This relation allows us to lift the code on $F_j$ into a code on $F_{j+1}$ while preserving triorthogonality. Then, we lift the Riemann-Roch space similarly by taking the conorm map of $G_{j+1} = \operatorname{Con}_{F_{j+1}/F_j}(G_j)$ to specify the lifted code. This builds sequences of triorthogonal codes by successive liftings.}  
    \label{fig:agcartoon}
\end{figure}

% \itodo{Motivate AG formalism and definitions more in words}
%\Diyi{In general, we shouldn't expect reader first read other papers or textbook first to be ready to understand this paper unless this is mainly written for experts. May need add more intro.}
Before proceeding, let us define some preliminary notation that will be used throughout the paper. 
These concepts are well-known and basic to the field of algebraic geometry, and so describing our problem in these terms will allow us to borrow major results from that field to develop provably efficient AG codes for MSD purposes. %, as we will see. 
%\itodo{Diyi: adding some definitions}

\begin{definition}[Algebraic Function field] 
% \Shuchen{I think it is very standard to use [XXX] after definition, instead of typing (XXX). So I changed throughout the paper. }
    An algebraic function field $F/K$ of one variable over $K$ is an extension field $F \supseteq K$ such that F is a finite algebraic extension of $K(x)$ for some element $x\in F$ that is transcendental over K. Associated with $F$ is a non-negative integer $g: = g(F)$, called the genus.
\end{definition}
We consider only the case where $K = \mathbb{F}_q$ is the full field of constants. 
% We further require a valuation ring $\mathcal{O}$ of the function field $F/\mathbb{F}_q$ which satisfies:
From a function field we may build maximal ideals called ``places'' as well as divisors thereupon, which will serve as the central mathematical objects of AG code constructions. 
A place $P$ is uniquely related to a valuation ring $\mathcal{O}_P$ of the function field; such a ring satisfies: 
\begin{enumerate}
    \item $\mathbb{F}_q \subsetneqq \mathcal{O} \subsetneqq F$, and
    \item If $x\in F$, then $x\in \mathcal{O}$ or $x^{-1} \in \mathcal{O}$. 
\end{enumerate}

\begin{definition}[Places]
    A place $P$ is the maximal ideal of a valuation ring $\mathcal{O}$. 
    % We will write $\mathcal{O}_P$ to indicate the valuation ring and its place. 
    The set of all places of $F$ will be denoted $\mathbb{P}(F)$.
    %$$\mathbb{F}_q\subset \mathcal{O}_{P}\subset F.$$ %(i.e. $\mathcal{O}_P$ contains $z$ and/or $z^{-1}$ for all $z\in F$)
    We note the following attributes: 
    \begin{enumerate}
    
        \item A place $P$ has degree $\deg(P) = [F_P : \mathbb{F}_q]$ where $F_P := \mathcal{O}_{P}/P$, the residue field. The place $P$ is called rational if $\deg(P)=1$.  %\Diyi{name:residue class field?}
        % \item Every element $t\in P$ such that $P=tO$ is called a local parameter (a.k.a ``prime element'' or ``uniformizing variable'') of P. 
        \item The valuation $v_P$ associated with a place $P$ is a function evaluated on $x\in F$ 
        % $$v_P: x \to \mathbb{Z}\cup\{\infty\},$$ 
        % defined 
        as follows: 
        \begin{itemize}
        \item 
        % Evaluated on $x\in F\backslash\{0\}$, 
        % $v_{P}(x)\in\mathbb{Z}$ 
        % Given a local parameter $t$ of $P$, 
        If $x\neq0$ then, for any $t\in P$ such that $P=tO_P$, 
        % For any $t\in P$ such that $P=tO_P$, 
        % $x\in F\backslash\{0\}$ can be uniquely represented as 
        there exist unique $n\in\mathbb{Z}$ and $u,u^{-1}\in\mathcal{O}_P$ such that 
        % we can uniquely represent
        $x=t^n u$; %where 
        % $u\in\mathcal{O}_P$ and $u^{-1}\in\mathcal{O}_P$ and $n\in\mathbb{Z}$; 
        define $v_P(x):=n$. 
        % $u$ is a unit in $\mathcal{O}_P$ (i.e. both $u$ and $u^{-1}$ are in $\mathcal{O}_P$). 
        \item 
        Define %$v_{P}(x):=n$ and 
        $v_{P}(0):=\infty$ otherwise. 
        \end{itemize}
        % (In other words, the valuation associated with $P$ of $x$ is the degree of a $t$-adic expansion of $x$.) 
        \item For $x \in F$, a place $P$ is a zero of order $m$ if $v_{P}(x) = m > 0$; P is a pole of order $m$ if $v_{P}(x) = -m < 0$.
        \item A place $P'$ of a function field extension $F'$ lies above a place $P$ of $F$ if $\mathcal{O}_{P'}\cap F=\mathcal{O}_{P}$ (equivalently $P'\supseteq P$), and we denote $P'\mid P$. 
        The ramification of $P'$ over $P$ is $e(P'|P)\geq1$ such that $v_{P'}(x)=e(P'|P)v_P(x)$ for all $x\in F$. 
        \item A place $P$ splits completely in an extension $F'/F$ if $e(P'|P)=1$ for all  $P'\mid P$ (thereby ensuring $\sum_{P'|P}e(P'|P)=[F':F]$ places $P'$ lie above $P$). 
        % \item The different exponent of a place %with respect to an extension 
        % $P'|P$ 
        % is $d(P'|P)=-v_{P'}(t)\geq0$ for $t$ forming the complementary module $t O_P'=C_P:=\{x\in F'\mid \mathrm{Tr}_{F'/F}(xO_P^\prime)\subseteq\mathcal{O}_P\}$ of the integral closure $O_P':=\bigcap_{P'|P}\mathcal{O}_{P'}$ of $O_P$. %in $F'$. 
        % $\mathcal{C}_P=\{x\in F'\mid \mathrm{Tr}_{F'/F}(xO_P^\prime)\subseteq\mathcal{O}_P\}$ 
        % $\mathcal{O}_P^\prime=\bigcap_{P'|P}\mathcal{O}_{P'}$.
        % Alt: If the extension $F'/F=F(y)/F$ is separable, then $d(P'|P)=v_{P'}(\phi'(y))$ where $\phi[X]\in F'[X]$ is the minimal polynomial of $y$ with coefficients in $\mathcal{O}_P$. 
    \end{enumerate}
\end{definition}

% \noindent Note that, since $P$ is the maximal ideal of a valuation ring of $F/\mathbb{F}_q$, the valuation ring is uniquely determined by this place, justifying the notation $\mathcal{O}_P$.
From places we can construct certain collections called divisors, which will be the fundamental structures of our AG codes. 

\begin{definition}[Divisors]
    A divisor is a linear combination of places $M=\sum_P c_P P$, 
    with degree $\deg(M)=\sum_{P}c_{P}\deg(P)$ 
    and support $\mathrm{supp}(M)=\{P:c_P\neq0\}$. 
    Informally, we will say a place $P$ is in a divisor $M$ if $P \in \operatorname{supp}(M)$. 
    We say $M\geq0$ if $c_{P} \geq0$ for all $P$ in $M$. 
    We note the following kinds that will be used later: 
    \begin{enumerate}
    % \item The zero divisor associated with $x\in F\backslash\{0\}$ is 
    % \begin{equation}
    %     (x)_0:=\sum_{P\in Z } v_{P}(x) P,
    % \end{equation}
    % where $Z$ are the zeroes of $x$. 
    % \item The pole divisor associated with $x\in F\backslash\{0\}$ is 
    % \begin{equation}
    %     (x)_{\infty}:=\sum_{P \in N} (-v_{P}(x))P,
    % \end{equation}
    % where $N$ are the poles of $x$. 
    % \item The principal divisor associated with $x \in F\backslash\{0\}$ is 
    % \begin{equation}
    %     (x):= (x)_0-(x)_{\infty}.
    % \end{equation}
    % where the divisors $(x)_0$ and $(x)_\infty$ are the zero and pole divisors, respectively. %They are given by
    %Denote by Z and N the set of zeros and poles of x in the places of $F/\mathbb{F}_q$, respectively, and we define
    % \Edison{We are missing principal divisors}\Diyi{added}
    \item The canonical divisor associated with a differential $\omega$ is: 
    \begin{equation}(\omega) = \sum_{P\in\mathbb{P}(F)} v_P(\omega) P,
    \end{equation} 
    where $v_P(\omega):=v_P(z)$ given any decomposition $\omega = z\,dt$ with $z\in F$ and $tO_P=P$.\footnote{(For further reading, see Ref.~\cite[Chapter 5]{Stichtenoth2009} and Ref.~\cite[Chapter 12]{agcodes-advanced}).} 
    \item The conorm %divisor %over 
    of a divisor $M= \sum_{P} c_P P$ over $F$ 
    with respect to 
    a function field extension $F'/F$ is 
    \begin{equation}
        \mathrm{Con}_{F'/F}(M)=\sum_{P}\sum_{P'|P} e(P'|P) c_P P', 
        \label{eq:conorm}
    \end{equation}
    also a divisor. 
    % for a divisor $M= \sum_{P} c_P P$ over $F$. 
    % \item The different divisor over a function field extension $F'/F$ is 
    % \begin{equation}
    %     \mathrm{Diff}(F'/F)=\sum_{P}\sum_{P'|P} d(P'|P)  P'.
    % \end{equation}
    \item The cotrace $\mathrm{Cotr}_{F'/F}(\omega):=\omega'$ of a differential $\omega$ with respect to $F'/F$ is the unique differential satisfying $\omega'(\alpha)=\omega(\mathrm{Tr}_{F'/F}(\alpha))$, where $\alpha$ is a map $\mathbb{P}(F')\to F'$ such that $\alpha_{P'}=\alpha_{Q'} \iff P'\cap F= Q'\cap F$. 
    Moreover, the cotrace divisor is 
    \begin{equation}
        \label{eq:cotr-con-diff}
        (\omega')=\mathrm{Con}_{F'/F}((\omega))+\mathrm{Diff}(F'/F),
    \end{equation}
    in terms of the well-known, ``different'' divisor $\mathrm{Diff}(F'/F)\geq0$. (For more details, see Ref.~\cite[Chapter 3]{Stichtenoth2009}.) 
    \end{enumerate}
\end{definition}

The well-known Riemann-Roch space of a divisor then constitutes a subspace of functions that we will use to realize the encoding in our AG codes. 
\begin{definition}[Riemann-Roch Space]
For a divisor $G$, the associated Riemann-Roch space denoted $\mathcal{L}(G)$ is the vector space 
\begin{align}
    \mathcal{L}(G) = \{x \in F : (x) + G \geq 0\} \cup \{0\}.
\end{align}
\end{definition} 
\noindent Following the Riemann-Roch theorem~\cite{Stichtenoth2009}, %the dimension of 
the Riemann-Roch space $\mathcal{L}(G)$ is characterized by the degree $\operatorname{deg}(G)$ and the genus $g(F)$.

\subsection{Codes from Algebraic Geometry}

Using the essential objects of algebraic geometry, we are ready to introduce generic AG codes. 
\begin{definition}[Algebraic geometry code]
    \label{def-ag-code}
    
    % \begin{itemize}
    %     \item Consider the divisor formed by the uniform sum of pairwise distinct rational places $D = P_1 + \ldots + P_n$. 
    %     % \item For a divisor $M = \sum_P c_P P$, the support of $M$  is the set $\operatorname{supp}(M) := \{P: c_P \neq 0\}$. Informally, we will say a place $P$ is in a divisor $M$ to mean $P \in \operatorname{supp}(M)$ 
    %     \item Let $G =\sum_{P'}c_{P'}P'$ be another divisor of $F/\mathbb{F}_q$ disjoint from $D$.
    % \end{itemize}
    Let $F/\mathbb{F}_q$ be an algebraic function field.
    Let $D = P_1 + \cdots + P_n$ be a divisor consisting of pairwise distinct rational places,  and let $G =\sum_{P'}c_{P'}P'$ be another divisor of $F/\mathbb{F}_q$ disjoint from $D$.
    The AG code $C_{\mathcal{L}}(D,G)$ associated with the divisors $D$ and $G$ is the subspace 
\begin{align}
    C_{\mathcal{L}}(D,G) := \left\{\left(f(P_1), \ldots, f(P_n)\right) \mid f \in \mathcal{L}(G)\right\}
\end{align}
mapped by function field elements in the Riemann-Roch space $\mathcal{L}(G)$.

\end{definition}
%\Diyi{ADD discussion on what does $f(P)$ means.}
Later, we will see that to construct suitable codes for MSD, it will be convenient to also characterize the dual of the AG code defined above. From Ref.~\cite{STICHTENOTH1988199}, the dual to the AG code $C_{\mathcal{L}}(D,G)$ is also an AG code, see the proposition below reproduced without proof.

%\Diyi{ Also, it maybe helpful to clearly state what is the dual code.}
\begin{definition}[Dual code]
    % (Dual code)
    For a code $C \subseteq \mathbb{F}^n_q$, the dual $C^\perp$ is
    \begin{align}
        C^\perp := \left\{c' \in \mathbb{F}^n_q : \sum_{i=1}^n c_i c'_i = 0 \text{ for all } c \in C \right\}.
    \end{align}
\end{definition}

\begin{proposition}[Dual of $C_{\mathcal{L}}(D,G)$~\cite{STICHTENOTH1988199}]
    \label{prop:ag-dual}
    % Dual of $C_{\mathcal{L}}(D,G)$ (see Ref.~\cite{STICHTENOTH1988199}):
    For an AG code $C_{\mathcal{L}}(D,G)$, there exists a differential $\eta \in \Omega_{F}$ with simple poles at each $P_i \in \operatorname{supp}(D)$ such that its residue at each $P_i$ is $\operatorname{res}_{P_i}(\eta) = 1$.
    % \begin{align}
    %     \operatorname{res}_{P_i}(\eta) = 1.
    % \end{align}
    Moreover, the dual of the code $C_{\mathcal{L}}(D,G)$ is
    \begin{align}
        C_{\mathcal{L}}(D,G)^{\perp} := C_{\mathcal{L}}(D, D - G + (\eta))
    \end{align}
    where $(\eta)$ is the canonical divisor associated with $\eta$. Finally, it follows from the Weak Approximation Theorem~\cite[Theorem 1.3.1]{Stichtenoth2009}, that there exists $z \in F$ such that the valuations $v_{P_i}(z) = 1$ for all $P_i \in \operatorname{supp}(D)$. Then, one choice of $\eta$ is the logarithmic differential
    \begin{align}
        \eta := \frac{dz}{z}.
    \end{align}
\end{proposition}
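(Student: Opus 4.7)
The plan is to prove the three assertions in order: explicit construction of $\eta$ as the logarithmic differential $dz/z$, verification of its pole and residue structure at each $P_i$, and identification of $C_{\mathcal{L}}(D,G)^{\perp}$ with $C_{\mathcal{L}}(D, D-G+(\eta))$ via the residue theorem plus a Riemann--Roch dimension count.

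First, I invoke the Weak Approximation Theorem~\cite[Theorem 1.3.1]{Stichtenoth2009} applied to the finitely many distinct rational places $P_1,\ldots,P_n \in \operatorname{supp}(D)$ to produce $z \in F$ with $v_{P_i}(z)=1$ for each $i$, so that $z$ is a uniformizer at every $P_i$. Setting $\eta := dz/z$, I compute locally that $v_{P_i}(1/z) = -1$ and $v_{P_i}(dz) = 0$ (differentials of uniformizers), giving $v_{P_i}(\eta) = -1$, so each $P_i$ is a simple pole. The local expansion in the uniformizer $z$ itself yields $\eta = (1/z)\,dz$, from which $\operatorname{res}_{P_i}(\eta) = 1$ is immediate. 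This establishes both the existence assertion and the explicit form.

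Next, for the dual identity I would first establish the inclusion $C_{\mathcal{L}}(D, D - G + (\eta)) \subseteq C_{\mathcal{L}}(D,G)^{\perp}$. Fix $f \in \mathcal{L}(D - G + (\eta))$ and $g \in \mathcal{L}(G)$ and set $\omega := fg\eta$. Its divisor obeys
\begin{equation*}
(\omega) = (f) + (g) + (\eta) \geq -(D - G + (\eta)) - G + (\eta) = -D,
\end{equation*}
so the only possible poles of $\omega$ lie among the $P_i$ and are at worst simple. Because $P_i \notin \operatorname{supp}(G)$ ensures $g$ is regular at $P_i$, and the coefficient of $P_i$ in $D - G + (\eta)$ is $1 - 0 - 1 = 0$ ensures that $f$ is also regular at $P_i$, I obtain $\operatorname{res}_{P_i}(\omega) = f(P_i)\, g(P_i)\,\operatorname{res}_{P_i}(\eta) = f(P_i)\, g(P_i)$ via the standard formula for residues of a regular function times a simple-pole differential. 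The residue theorem on the curve then gives $\sum_i f(P_i)\, g(P_i) = 0$, which is exactly the orthogonality of the evaluation vector of $f$ against an arbitrary codeword of $C_{\mathcal{L}}(D,G)$.

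Finally, I close by a Riemann--Roch dimension count that promotes the inclusion to equality. For any divisor $A$ disjoint from $D$, $\dim C_{\mathcal{L}}(D,A) = \ell(A) - \ell(A - D)$. Applying Riemann--Roch once with $A = G$ (using $(\eta)$ as canonical divisor) and once with $A = D - G + (\eta)$ gives $\ell(G) - \ell((\eta) - G) = \deg(G) + 1 - g$ and $\ell(D - G + (\eta)) - \ell(G - D) = n - \deg(G) + g - 1$; summing yields $\dim C_{\mathcal{L}}(D,G) + \dim C_{\mathcal{L}}(D, D - G + (\eta)) = n = \dim C_{\mathcal{L}}(D,G) + \dim C_{\mathcal{L}}(D,G)^{\perp}$, so the inclusion is an equality. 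The main obstacle is the bookkeeping that $(\eta)$ contributes the coefficient exactly $-1$ at each $P_i$ (so that evaluation at the $P_i$ is well-defined on $\mathcal{L}(D - G + (\eta))$) together with the disjointness of $D$ and $\operatorname{supp}(G)$ from Def.~\ref{def-ag-code}; beyond these, no deeper input is needed than the global residue theorem and Riemann--Roch.
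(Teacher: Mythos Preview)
Your proof is correct, but you should know that the paper does not actually prove this proposition: it is explicitly ``reproduced without proof'' from Ref.~\cite{STICHTENOTH1988199}, and the only argument the paper supplies is the one-line observation after the statement that $v_{P_i}(\eta)=-1$ and $\operatorname{res}_{P_i}(\eta)=1$ for the choice $\eta=dz/z$, which matches your second paragraph. Your treatment goes well beyond this, supplying the full standard proof of the duality statement via the residue theorem for the inclusion plus a Riemann--Roch dimension count for equality; this is essentially the argument in Stichtenoth's textbook and the cited reference, so while it is not in the paper, it is exactly the proof the citation points to.
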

%\Diyi{existence of differential $\eta$ but also has an explicit formula? One thing missing is to explain what is dz, this is directly used without introduction.}
Clearly, $v_{P_i}(\eta) = -1$ and $\operatorname{res}_{P_i}(\eta) = 1$ for all $P_i \in \operatorname{supp}(D)$. Such a function $z \in F$ can be found  when the function field $F = \mathbb{F}_q(x)$. Our procedure only requires computing $\eta$ explicitly in the rational function field. 

The parameters $[n, k, d]$ of the AG codes $C:=C_{\mathcal{L}}(D,G)$ and $C^\perp=C_{\mathcal{L}}(D,G)^\perp$ can be calculated or bounded from the divisors $D$ and $G$ and the genus $g := g(F)$. 
 If $2g - 2 < \operatorname{deg}(G) < n$, then $k$ and $d$ of $C$ satisfy
        \begin{align}
        \label{eq:ag-code-parameters}
            k &= \operatorname{deg}(G) + 1 - g\notag\\
            d &\geq n - \operatorname{deg}(G).
        \end{align}
The distance of the dual code, $d^\perp$, satisfies
        \begin{align}
            \label{eq:ag-parameters-dual-code}
            %k^\perp &\geq n + g - 1 - \operatorname{deg}(G)\notag\\
            d^\perp &\geq \operatorname{deg}(G) - (2g - 2).
        \end{align}
Details can be found in Ref.~\cite[Corollary 2.2.3 and Theorem 2.2.7]{Stichtenoth2009}.

\begin{comment}
To construct family of codes, we will also need to relate divisors of a function field $F$ to certain divisors of the extension of $F'$. To this end, we define the Conorm map and the different divisor. 
\begin{definition}
    (Conorm and Different divisors):
    Let the function field $F'$ be an extension of $F$, $\mathbb{P}_{F'}$ and $\mathbb{P}_F$ be the set of places of $F$ and $F'$. Suppose $Q \in \mathbb{P}_{F'}$ is a place lying over $P \in \mathbb{P}_F$ with ramification index $e(Q|P) \geq 1$.
    The conorm divisor is 
    \begin{align}
        \operatorname{Con}_{F'/F}: \operatorname{Div}(F) \longmapsto \operatorname{Div}(F')\notag\\
        \operatorname{Con}_{F'/F}\left(\sum n_P P\right) = \sum\limits_{Q|P} \sum\limits_{P} n_P e(Q|P) Q
    \end{align}
    and the different divisor is defined as
    \begin{align}
        \operatorname{Diff}(F'/F) = \sum d(Q|P) P
    \end{align}
    \mjc{[Sum over $Q$, $P$, or both?]} 
    where $d(Q|P)$ is the different exponent. 
\end{definition}
\mjc{Please add different exponent def/formula.}
\end{comment}

% \section{Code properties}
\subsection{Triorthogonal codes}
Triorthogonal codes were first proposed in \cite{PhysRevA.86.052329} with the explicit purpose of making qubit magic state distillation more efficient. They are the most general stabilizer codes with transversal T gates, and important because they have low-overhead ($\gamma=\log_2(3)\approx1.6$) for universal qubit fault-tolerant quantum computing.

A triorthogonal code is a stabilizer code that comes from a special kind of matrix $\mathcal{G}$ with two orthogonality conditions: \emph{pairwise orthogonality} between any two rows and \emph{triple orthogonality} where the product of any three rows is 0. 
Subsequent research extended the triorthogonal code framework to qudits~\cite{Prakash:2024ghq,Sharma:2024zbn,Prakash:2025azi}.
%Given a code $C \subseteq \mathbb{F}^n_q$ which contains the constant word $(1, {...}, 1)$, it will have pairwise orthogonality. 
For triorthogonality it is more practical formulate it via the star ($\star$) product of codes following Ref.~\cite{PhysRevLett.123.070507,Nguyen:2024qwg} where a code $C^{\star 2}$ is given by
    \begin{align}
        C^{\star2} := C\star C = \left\{c\star c' \mid c, c' \in C\right\}
    \end{align}
    and $c \star c'$ is the point-wise product of codewords $c$ and $c^\prime$, 
    \begin{align}
        c \star c' = (c_1c^\prime_1,\ldots, c_nc^\prime_n).
    \end{align}
    A code $C \subseteq \mathbb{F}^n_q$ satisfies the \emph{star-square property} if $C^{\star 2} \subseteq C^{\perp}$ and is equivalent to triple orthogonality. In addition to triple orthogonality, requiring the code to contain the constant word $(1,\ldots,1)$ ensures the code has pairwise orthogonality.  
    % \st{Succinctly,}

\begin{definition}[Triorthogonal code]
    A linear code satisfying the Star-square property i.e. triple orthogonality i.e. triple orthogonality and  containing the constant word $(1, \ldots, 1)$ i.e. pairwise orthogonality is triorthogonal.
\end{definition}

In Lemma 2.3 of Ref.~\cite{Nguyen:2024qwg}, it is shown that a classical triorthogonal code yields a quantum code $\mathcal{Q}: \left[[\mathcal{N}, \mathcal{K}, \mathcal{D}\right]]_{\mathbb{F}_q}$ where the transversal $(CCZ)^{\otimes \mathcal{N}}$ gate implements the logical $\overline{CCZ}^{\otimes \mathcal{K}}$ gate. Following Refs.~\cite{Nguyen:2024qwg,Wills:2024wid}, it is possible to construct triorthogonal AG codes. From e.g. Ref.~\cite{Nguyen:2024qwg}, there is a sufficient condition for an AG code to be triorthogonal for power of two qudits, $q = 2^m$. For completeness, we will present this sufficient condition using Prop.~\ref{prop:relations-ag-codes} and Lemma~\ref{lemma:triorthogonal-ag-codes}.

\begin{proposition}[Some useful properties of AG codes]
    \label{prop:relations-ag-codes}
    Let $C := C_{\mathcal{L}}(D,G)$ and $C' := C_{\mathcal{L}}(D,G')$ be two AG codes with duals $C^{\perp} := C_{\mathcal{L}}(D,D - G + (\eta))$ and $(C')^{\perp} := C_{\mathcal{L}}(D,D - G' + (\eta'))$ respectively. Then

    \begin{enumerate}
        \item If $G \leq G'$ then $C \subseteq C^\prime$
        \item $C^{\star 2} \subseteq C_{\mathcal{L}}(D,2G)$.
    \end{enumerate}
\end{proposition}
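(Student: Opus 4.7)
The plan is to prove both statements by unpacking the definitions of the Riemann--Roch space and the evaluation map, then doing divisor arithmetic. Neither claim requires deep machinery from algebraic geometry; both follow from the fact that $\mathcal{L}(\cdot)$ is monotone and multiplicative with respect to divisor addition.

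For part (1), I would first show the containment of Riemann--Roch spaces $\mathcal{L}(G) \subseteq \mathcal{L}(G')$. Given $f \in \mathcal{L}(G)$, the defining condition $(f) + G \geq 0$ together with the hypothesis $G' - G \geq 0$ yields $(f) + G' = (f) + G + (G' - G) \geq 0$, so $f \in \mathcal{L}(G')$. Applying the evaluation map $f \mapsto (f(P_1), \ldots, f(P_n))$ to both spaces then produces the containment of codes $C_{\mathcal{L}}(D,G) \subseteq C_{\mathcal{L}}(D,G')$ directly, since the same evaluation rule is used on both sides with $D$ fixed. This step is essentially by definition once monotonicity of $\mathcal{L}$ is in hand.

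For part (2), I would take arbitrary $c, c' \in C$, which by Definition~\ref{def-ag-code} must have the form $c = (f(P_1), \ldots, f(P_n))$ and $c' = (g(P_1), \ldots, g(P_n))$ for some $f, g \in \mathcal{L}(G)$. The pointwise product is then $c \star c' = ((fg)(P_1), \ldots, (fg)(P_n))$, so it remains to verify that $fg \in \mathcal{L}(2G)$. Using the additivity of the principal divisor under multiplication of functions, I compute
\begin{equation}
    (fg) + 2G = \bigl((f) + G\bigr) + \bigl((g) + G\bigr) \geq 0,
\end{equation}
where each parenthesized summand is non-negative by assumption on $f$ and $g$. Hence $fg \in \mathcal{L}(2G)$, and $c \star c' \in C_{\mathcal{L}}(D, 2G)$. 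Since $C_{\mathcal{L}}(D, 2G)$ is a linear subspace, the same conclusion extends to linear combinations of such products, giving the claimed containment $C^{\star 2} \subseteq C_{\mathcal{L}}(D, 2G)$.

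I do not anticipate any genuine obstacle in this proof; the main thing to be careful about is making sure $2G$ and $D$ remain disjoint so that $C_{\mathcal{L}}(D, 2G)$ is a well-defined AG code in the sense of Definition~\ref{def-ag-code}, which is immediate since $\operatorname{supp}(2G) = \operatorname{supp}(G)$ is already disjoint from $\operatorname{supp}(D)$ by hypothesis on $C$. The content of the proposition is really that the $\star$-square of an AG code is itself contained in an AG code whose defining divisor has simply been doubled, a fact that will be the key ingredient when later verifying the star-square property $C^{\star 2} \subseteq C^{\perp}$ needed for triorthogonality.
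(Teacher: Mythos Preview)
Your proof is correct and follows essentially the same approach as the paper: both arguments rest on the monotonicity $G \leq G' \Rightarrow \mathcal{L}(G) \subseteq \mathcal{L}(G')$ for part (1) and the multiplicativity $f,g \in \mathcal{L}(G) \Rightarrow fg \in \mathcal{L}(2G)$ for part (2). You have simply unpacked the divisor arithmetic more explicitly than the paper's one-line justifications, and your remark about $\operatorname{supp}(2G) = \operatorname{supp}(G)$ being disjoint from $\operatorname{supp}(D)$ is a tidy detail the paper leaves implicit.
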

\begin{proof}
    The first statement simply follows from the fact that $G \leq G^\prime$ implies $\mathcal{L}(G) \subseteq \mathcal{L}(G^\prime)$.  Then, every codeword $(f(P_1), {...}, f(P_n)) \in C$ must also be in $C^\prime$. Similarly, the second statement follows from the fact that for every pair $f,g \in \mathcal{L}(G)$, the product $fg$ is in $\mathcal{L}(2G)$.
\end{proof}

Using Proposition~\ref{prop:relations-ag-codes}, requiring $C_{\mathcal{L}}(D,2G) \subseteq C_{\mathcal{L}}(D,G)^{\perp}$ ensures $C^{\star 2} \subseteq C^{\perp}$. The lemma below will present sufficient conditions on the divisor $G$ to obtain a triorthogonal AG code. 

\begin{lemma}[Triorthogonal AG code]
    \label{lemma:triorthogonal-ag-codes}
    Let $C:= C_{\mathcal{L}}(D, G)$ with dual $C^{\perp} = C_{\mathcal{L}}(D, D - G + (\eta))$ where $\eta$ is as given in Prop.~\ref{prop:ag-dual}. 
    If 
    \begin{align}
        G &\geq 0 \; \text{  and  } \; 3G \leq D + (\eta),
    \end{align}
    then $(1, {...}, 1) \in C$ and $C^{\star 2} \subseteq C^{\perp}$.
\end{lemma}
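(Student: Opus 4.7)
The plan is to derive each of the two conclusions directly from Prop.~\ref{prop:relations-ag-codes} and the definition of the Riemann-Roch space, with the two hypotheses ($G\geq 0$ and $3G\leq D+(\eta)$) feeding the two conclusions respectively.

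First I would verify that $(1,\ldots,1)\in C$. Since the principal divisor $(1)$ of the constant function equals $0$, the hypothesis $G\geq 0$ gives $(1)+G=G\geq 0$, so the constant function lies in $\mathcal{L}(G)$. Evaluating at the rational places $P_1,\ldots,P_n$ appearing in $D$ then produces the all-ones codeword in $C_{\mathcal{L}}(D,G)$.

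Next I would establish $C^{\star 2}\subseteq C^{\perp}$ by a chain of two inclusions. The second item of Prop.~\ref{prop:relations-ag-codes} already gives $C^{\star 2}\subseteq C_{\mathcal{L}}(D,2G)$, so it suffices to show
\begin{align}
C_{\mathcal{L}}(D,2G)\subseteq C_{\mathcal{L}}(D,D-G+(\eta))=C^{\perp}.
\end{align}
By the first item of Prop.~\ref{prop:relations-ag-codes}, this holds whenever $2G\leq D-G+(\eta)$, which is exactly the rearrangement of the standing hypothesis $3G\leq D+(\eta)$. Composing the two inclusions yields $C^{\star 2}\subseteq C^{\perp}$.

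There is no real obstacle here: the lemma is essentially a bookkeeping exercise, packaging the conditions of Prop.~\ref{prop:relations-ag-codes} into a single divisor inequality. The only thing I would be mildly careful about is the implicit disjointness assumption underlying Def.~\ref{def-ag-code} — namely that $\mathrm{supp}(2G)$ and $\mathrm{supp}(D-G+(\eta))$ remain disjoint from $\mathrm{supp}(D)$ — but this is automatic once $G$ and $D$ are disjoint and $\eta$ is chosen as in Prop.~\ref{prop:ag-dual} with simple poles exactly along $D$, so $D+(\eta)$ is supported away from $D$.
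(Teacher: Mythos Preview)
Your proposal is correct and follows essentially the same approach as the paper's own proof: both use $G\geq 0$ to place the constant function in $\mathcal{L}(G)$, and both chain $C^{\star 2}\subseteq C_{\mathcal{L}}(D,2G)\subseteq C^{\perp}$ via the two items of Prop.~\ref{prop:relations-ag-codes}, with the second inclusion following from the divisor rearrangement $2G\leq D-G+(\eta)\iff 3G\leq D+(\eta)$. Your added remark on the disjointness of supports is a nice bit of extra care that the paper leaves implicit.
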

\begin{proof}
  The first statement follows from the fact that for $G \geq 0$, the constant function $f = 1$ is in $\mathcal{L}(G)$ since it has no poles (and no zeros). The second statement arises from the fact that $2G \leq D - G + (\eta) \Leftrightarrow 3G \leq D + (\eta) $. Therefore, under this condition, we have $C_{\mathcal{L}}(D,2G) \subseteq C^{\perp}$ (Prop.~\ref{prop:relations-ag-codes}, first statement) and consequently $C^{\star 2} \subseteq C^{\perp}$(Prop.~\ref{prop:relations-ag-codes}, second statement).   
\end{proof}

\subsection{Tsfasman-Vladut-Zink bound}
One of our main results is that our triorthogonal codes are the first to exceed the TVZ bound~\cite{mana.19821090103}. Below, we review asymptotic bounds of classical codes and the TVZ bound. For an $[n, k, d]$ classical code $C$ over $\mathbb{F}_q$, the ratios:
\begin{align}
    R(C) := \frac{k}{n} \text{ and } \delta(C) := \frac{d}{n} 
\end{align}
are called encoding rate and relative distance respectively. The set 
\begin{align}
    V_q := \{ (\delta(C), R(C)) \mid C \text{ is a code over } \mathbb{F}_q  \} \subseteq [0,1]^2
\end{align}
contains all possible points $(\delta(C), R(C))$ where $C$ is a code over $\mathbb{F}_q$. The asymptotic behavior of codes over $\mathbb{F}_q$ can be understood by studying the set of limit points of $V_q$, denoted $U_q$. That is, a point $(\delta, R) \in [0,1]^2 $ is in $U_q$ if and only if there exists a sequence of codes $\{C_i\}_{i \geq 0}$ such that as $i \rightarrow \infty$, 
\begin{align}
    n_i &\rightarrow \infty, R(C_i) \rightarrow R \text{ and } \delta(C_i) \rightarrow \delta.
\end{align}

There exists a continuous function $\alpha_q$ such that $0 \leq R \leq \alpha_q(\delta)$ for all $\delta$. $\alpha_q$ satisfies $\alpha_q(0) = 1$ and  $\alpha_q(\delta) = 0$ for $1 - 1/q \leq \delta \leq 1$. The exact value of $\alpha_q(\delta)$ in the interval $0 < \delta < 1 - 1/q$ is unknown. However, there are bounds. Lower bounds are important because a non-zero lower bound guarantees the existence of good codes. One lower bound is the GV bound which random codes typically exceed. A significant achievement of algebraic geometry code construction is that it yields an improved bound, the so-called TVZ bound. For a square $q = r^2$, the TVZ bound is
\begin{align}
    \alpha_q(\delta) \geq 1 - \frac{1}{r-1} - \delta.
\end{align}
For $q \geq 49$ -- which is for all codes we consider -- the TVZ bound is larger than the GV bound for an increasing interval around $\delta\sim0.4$, showing that there are AG codes better than random codes. We will show that despite the extra algebraic constraint imposed by the triorthogonality requirement, our codes satisfy the TVZ bound. We call our codes triorthgonal TVZ codes.

\section{Classical Code Constructions}
\label{sec:classical-code}

Following Refs.~\cite{Nguyen:2024qwg,Wills:2024wid}, we construct a good family of triorthogonal codes $\{C_j\}_{j\rightarrow\infty}$ on qudits with dimension $q = r^2$ where $r\geq 8$ is a power of two. 
A ``good'' code means that both its rate $k_j/n_j$ and the relative distance $d_j/n_j$ do not vanish as $j\to\infty$.
Since we preserve the triorthogonal property, this family of good classical codes will yield quantum codes with transversal $CCZ$ gates for qudits of dimensions $q \geq 8^2$. 

Our construction is recursive. We consider a tower of function fields $\mathcal{W} = (F_0, F_1, {...})$, 
where $F_0 = \mathbb{F}_q(x_0)$ is the rational function field, 
$F_j$ is an extension of $F_{j-1}$ (for $j \geq 1$), and 
$\mathbb{F}_q$ is the full field of constants for each function field $F_j$. %\Diyi{what is $x_0$ here? Is this an algebraic extension?} %each function field $F_j$ is an extension  $F_j/\mathbb{F}_q$ \Shuchen{of $F_{j-1}/\mathbb{F}_q$?}. 
Each code $C_j$ is an AG code $C_j := C_{\mathcal{L}}(D_j, G_j)$, where each $D_j$ and $G_j$ are disjoint divisors in $F_j$ and the places in $D_j$ are rational. %\Diyi{How do we extend the field from $F_j$ to $F_{j+1}$? }

The simplicity of our construction is that we only need to construct the base code $C_0$ explicitly. This construction is straightforward since $F_0$ is the rational function field. Then, we present our lifting procedure (Sec.~\ref{sec:lifting}) to obtain the code $C_{j+1}$ using the $C_j$ code. Remarkably, we will show that this lifting procedure preserves triorthogonality (Thm.~\ref{thm:lifting}). Another convenient property of our construction is that the parameters of the $C_j$ codes can be derived or bounded using the parameters of $C_0$.
 
\subsection{Code lifting}
\label{sec:lifting}
Let the function field $F_{j+1}$ be an extension of the function field $F_j$. %\Diyi{this maybe a little bit vague, what type of extension is this one? how to construct?} 
We assume that both function fields have the same field of constants $\mathbb{F}_q$. %\Shuchen{The reason for having such short-handed notation is for simplicity in the following def? }
We can now use the conorm divisor defined in Eq.~\eqref{eq:conorm} with respect to these field extensions to lift codes recursively. 
%\Diyi{give Con in Def 4 a equation index and refer back.}
\begin{definition}[Code lifting]
    \label{def:lifting}
    Let $C_j := C_{\mathcal{L}}(D_j, G_j)$ where $D_j = P_1 + \cdots + P_{n_j}$ includes only places that split %such that each $P_j$ splits
    completely in the extension $F_{j+1}/F_j$. Set
    \begin{align}
        D_{j+1} &:= \operatorname{Con}_{F_{j+1}/F_j}(D_j),\notag\\ G_{j+1} &:= \operatorname{Con}_{F_{j+1}/F_j}(G_j).
    \end{align}
    Then $C_{j+1}:=C_\mathcal{L}(D_{j+1}, G_{j+1})$ is the lifting of $C_j$.
\end{definition}
Note that $C_{j+1}$ is a valid AG code (Sec.~\ref{sec:basic-notions}) because
\begin{enumerate}
    \item Since each place $P \in \operatorname{supp}(D_j)$ splits completely in the extension $F_{j+1}/F_j$, the places $P'\in \operatorname{supp}(D_{j+1})$ are rational places of $F_{j+1}$.
    \item Since $D_j$ and $G_j$ are disjoint, so are $D_{j+1}$ and $G_{j+1}$.
\end{enumerate}
Moreover, we can relate $C_{j+1}^\perp$ to $C_j^\perp$. 
\begin{lemma}[$C^\perp_{j+1}$ code]
    \label{lemma:dual-lifting} 
    Let $C_j := C_{\mathcal{L}}(D_j,G_j)$. Suppose $z \in F_j$ is as in Prop.~\ref{prop:ag-dual} so that $C^\perp_j = C_{\mathcal{L}}(D_j, D_j-G_j + (\eta_j))$ where $\eta_j = dz/z$. Regarding $z$ as an element of $F_{j+1}$, let us set $\eta_{j+1} = dz/z$, a differential in $F_{j+1}$. Then, $C_{j+1}^\perp = C_{\mathcal{L}}(D_{j+1}, D_{j+1} - G_{j+1} + (\eta_{j+1}))$. %\Shuchen{double check the highlighted $+1$}\Edison{Thanks for catching it!} 
\end{lemma}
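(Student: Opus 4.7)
The plan is to apply Prop.~\ref{prop:ag-dual} directly to the lifted code $C_{j+1} = C_\mathcal{L}(D_{j+1}, G_{j+1})$ with the candidate differential $\eta_{j+1} = dz/z \in \Omega_{F_{j+1}}$. Once we verify that $\eta_{j+1}$ has a simple pole with residue $1$ at every place $P' \in \operatorname{supp}(D_{j+1})$, the proposition immediately yields
\begin{align*}
C_{j+1}^\perp = C_\mathcal{L}(D_{j+1},\, D_{j+1} - G_{j+1} + (\eta_{j+1})).
\end{align*}
So the proof reduces to checking these two local conditions.

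The key step is to show that $z$ remains a local uniformizer at each $P' \in \operatorname{supp}(D_{j+1})$. By Def.~\ref{def:lifting}, every such $P'$ lies above some $P \in \operatorname{supp}(D_j)$, and every place in $\operatorname{supp}(D_j)$ splits completely in $F_{j+1}/F_j$, so $e(P'|P) = 1$. By the hypothesis on $\eta_j$, the element $z$ satisfies $v_P(z) = 1$. Applying the multiplicative property of valuations under extension,
\begin{align*}
v_{P'}(z) = e(P'|P)\, v_P(z) = 1,
\end{align*}
so $z$ is a uniformizer at $P'$. This immediately gives $v_{P'}(\eta_{j+1}) = v_{P'}(dz/z) = -1$ (a simple pole), and via the standard identity $\operatorname{res}_{P'}(df/f) = v_{P'}(f)$ applied to $f = z$, the residue $\operatorname{res}_{P'}(\eta_{j+1}) = 1$. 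Both hypotheses of Prop.~\ref{prop:ag-dual} are therefore satisfied, and the claim follows.

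I do not anticipate a serious obstacle beyond this local bookkeeping: the real work is done by the complete-splitting condition built into Def.~\ref{def:lifting}, which trivializes the ramification and lets the uniformizing property of $z$ — and hence the residue of $dz/z$ — carry from $F_j$ to $F_{j+1}$ without change. One could alternatively try to identify $\eta_{j+1}$ with a cotrace $\mathrm{Cotr}_{F_{j+1}/F_j}(\eta_j)$ and invoke Eq.~\eqref{eq:cotr-con-diff}, but this introduces the different divisor $\mathrm{Diff}(F_{j+1}/F_j)$ and obscures what is really a short local calculation at the completely split places; for the present lemma only the local behavior of $\eta_{j+1}$ at $\operatorname{supp}(D_{j+1})$ is needed, not its global canonical divisor.
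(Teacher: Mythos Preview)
Your proof is correct and follows essentially the same approach as the paper: show $v_{P'}(z)=e(P'|P)v_P(z)=1$ for each $P'\in\operatorname{supp}(D_{j+1})$ using the complete-splitting hypothesis, then invoke Prop.~\ref{prop:ag-dual}. Your version is more explicit about the residue calculation, but the argument is the same.
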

%\Diyi{$z$ is a random element of $F_j$? in each lifting process?}
\begin{proof}
To prove this statement, we observe that any place $P'\in \operatorname{supp}(D_{j+1})$ lies above some place $P \in \operatorname{supp}(D_j)$. Then, $v_{P'}(z) = e(P'|P)v_{P}(z) = v_{P}(z) = 1$. Therefore, this choice of $\eta_{j+1}$ satisfies Prop.~\ref{prop:ag-dual}.
\end{proof}

Now, we have all the ingredients to establish one of our main results. More precisely, we show that our lifting operation preserves the multiplication property $C^{*2} \subseteq C^\perp$. 
\begin{theorem}[Lifting theorem]
    \label{thm:lifting}
    Suppose the code $C_j := C_\mathcal{L}(D_j, G_j)$ is as in Lemma~\ref{lemma:triorthogonal-ag-codes}, i.e. %\Shuchen{does $\eta_j$ below need a parathesis?}\Edison{Yes. Here, $(\eta_j)$ is the divisor of $\eta_j$.}\Shuchen{kk}
    \begin{align}
        G_j \geq 0 \text{ and } 3G_j \leq D_j + (\eta_j).
    \end{align}
    Then, so is the lifted code $C_{j+1}$.
    % $C_\mathcal{L}(D_j, 2G_j) \subseteq C_\mathcal{L}(D_j, G_j)^{\perp}$, then the lifted code $C_{j+1} := C_\mathcal{L}(D_{j+1}, G_{j+1})$ also satisfies this property, i.e. $C_\mathcal{L}(D_{j+1}, 2G_{j+1}) \subseteq C_\mathcal{L}(D_{j+1}, G_{j+1})^{\perp}$. 
\end{theorem}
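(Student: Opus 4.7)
The plan is to push the hypothesized inequality $3 G_j \leq D_j + (\eta_j)$ through the conorm map $\operatorname{Con}_{F_{j+1}/F_j}$, and then bound the resulting $\operatorname{Con}((\eta_j))$ by $(\eta_{j+1})$ using the cotrace--conorm identity from Eq.~\eqref{eq:cotr-con-diff}.

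First I would handle the easy condition $G_{j+1} \geq 0$: since the conorm of $G_j = \sum_P c_P P$ is $\sum_P \sum_{P'|P} e(P'|P)\, c_P\, P'$ with $e(P'|P) \geq 1$ and all $c_P \geq 0$, non-negativity is preserved. By the same $\mathbb{Z}$-linearity and monotonicity of the conorm, pushing the hypothesized inequality forward gives
\begin{align}
3 G_{j+1} = \operatorname{Con}(3 G_j) \leq \operatorname{Con}(D_j) + \operatorname{Con}((\eta_j)) = D_{j+1} + \operatorname{Con}((\eta_j)),
\end{align}
using the definitions $D_{j+1} = \operatorname{Con}(D_j)$ and $G_{j+1} = \operatorname{Con}(G_j)$. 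It then suffices to prove $\operatorname{Con}((\eta_j)) \leq (\eta_{j+1})$ in order to conclude $3 G_{j+1} \leq D_{j+1} + (\eta_{j+1})$ and thereby verify the hypothesis of Lemma~\ref{lemma:triorthogonal-ag-codes} for $C_{j+1}$.

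The main obstacle is that final comparison of canonical divisors. I would invoke Eq.~\eqref{eq:cotr-con-diff}, which states $(\operatorname{Cotr}_{F_{j+1}/F_j}(\omega)) = \operatorname{Con}((\omega)) + \operatorname{Diff}(F_{j+1}/F_j)$ for any $\omega \in \Omega_{F_j}$. The key identification is that $\eta_{j+1} = dz/z$, with $z \in F_j$ viewed in $F_{j+1}$, coincides with $\operatorname{Cotr}_{F_{j+1}/F_j}(\eta_j)$: the cotrace is $F_j$-linear and carries an exact differential $df$ with $f \in F_j$ to the same $df$ interpreted in $\Omega_{F_{j+1}}$, so $\operatorname{Cotr}(z^{-1}\,dz) = z^{-1}\,dz = \eta_{j+1}$. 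Non-negativity of $\operatorname{Diff}(F_{j+1}/F_j)$ then gives $\operatorname{Con}((\eta_j)) = (\eta_{j+1}) - \operatorname{Diff}(F_{j+1}/F_j) \leq (\eta_{j+1})$, closing the argument. Should the cotrace identification feel too abstract, a fallback is a direct place-by-place valuation check: for each $P' \mid P$ the ramification relation $v_{P'}(z) = e(P'|P)\, v_P(z)$ together with a short uniformizer computation at $P'$ reproduces the same inequality $v_{P'}(\eta_{j+1}) \geq e(P'|P)\, v_P(\eta_j)$, with the gap being exactly the different exponent $d(P'|P) \geq 0$.
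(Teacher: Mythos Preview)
Your proposal is correct and follows essentially the same route as the paper: both arguments reduce to the identity $(\eta_{j+1}) = \operatorname{Con}_{F_{j+1}/F_j}((\eta_j)) + \operatorname{Diff}(F_{j+1}/F_j)$ obtained by identifying $\eta_{j+1} = \operatorname{Cotr}_{F_{j+1}/F_j}(\eta_j)$ and invoking Eq.~\eqref{eq:cotr-con-diff}, then use non-negativity of the different together with monotonicity of the conorm. Your two-step presentation (push the inequality through the conorm, then bound $\operatorname{Con}((\eta_j)) \leq (\eta_{j+1})$) and the place-by-place fallback are minor expository variations, not a different strategy.
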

\begin{proof}
    To prove this theorem, we observe that $G_{j+1} \geq 0$ is trivial since the Conorm of a positive divisor is also a positive divisor. The rest is to establish that $3G_{j+1} \leq D_{j+1} + (\eta_{j+1})$. 
    In fact, one can derive the formula 
\begin{align}
    &3G_{j+1} - D_{j+1} - (\eta_{j+1)})\notag\\ =& 
    \operatorname{Con}_{F_{j+1}/F_j}\left(3G_j - D_{j} - (\eta_j)\right) - \operatorname{Diff}(F_{j+1}/F_j)
\end{align}
by identifying our lifted differential with the cotrace of the lower differential, $\eta_{j+1}=\mathrm{Cotr}_{F'/F}(\eta_j)$, and using the cotrace divisor formula of Eq.~\eqref{eq:cotr-con-diff}. 
By assumption, we have $3G_j - D_{j} - (\eta_j) \leq 0$. Then, since the different divisor is positive, we arrive at $3G_{j+1} - D_{j+1} - (\eta_{j+1}) \leq 0$, concluding the proof. 
\end{proof}

% In App.~\ref{app:lifting-full-proof}, we present a more detailed proof of the lifting theorem (Thm.~\ref{thm:lifting}).
In other words, our lifting procedure allows one to obtain new triorthogonal codes from existing ones.  Next, we relate the parameters of the $C_{j+1}$ code to those of $C_j$. 
\begin{lemma}[Parameters of $C_{j+1}$]
    \label{lemma:lifting-parameters}
    Let $C_j =C_{\mathcal{L}}(D_j, G_j)$ and $C_{j+1} = C_{\mathcal{L}}(D_{j+1}, G_{j+1})$
    , where $D_{j+1} = \operatorname{Con}_{F_{j+1}/F_j}(D_j)$ and $G_{j+1} = \operatorname{Con}_{F_{j+1}/F_j}(G_j)$. Then

    \begin{enumerate}
        \item Length $n_{j+1}$ satisfies
        \begin{align}
            n_{j+1} &= \operatorname{deg}(D_{j+1})\notag\\ &= [F_{j+1} : F_j]\operatorname{deg}(D_j)\notag\\
            &= [F_{j+1} : F_j] n_j.
        \end{align}
        \item Let $g_j := g(F_j)$ and $g_{j+1} := g(F_{j+1})$ be the genera %\Diyi{Is this genus?} 
        of the two function fields respectively. Suppose $2g_{j+1} - 2 < \operatorname{deg}(G_{j+1}) < n_{j+1}$. Then, the dimension $k_{j+1}$ is
        \begin{align}
            k_{j+1} &= \operatorname{deg}(G_{j+1}) + 1 - g_{j+1}\notag\\
            &= [F_{j+1}:F_j] \operatorname{deg}(G_j) + 1 - g_{j+1}.
        \end{align}
        \item Distance $d_{j+1}$ satisfies
        \begin{align}
            d_{j+1} &\geq n_{j+1} - \operatorname{deg}(G_{j+1})\notag\\
            % &= [F_{j+1}:F_j]\left(\operatorname{deg}(D_j) - \operatorname{deg}(G_j)\right)\notag\\
            &= [F_{j+1}:F_j]\left(n_j - \operatorname{deg}(G_j)\right).
        \end{align}
        %Note that for $d_j$, we have $d_j \geq n_j - \operatorname{deg}(G_j)$.
    \end{enumerate}
\end{lemma}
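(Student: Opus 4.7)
The plan is to observe that each of the three parameter statements follows by applying the general AG code parameter formulas from Eq.~\eqref{eq:ag-code-parameters} to the lifted code $C_{j+1}$, and then translating degrees of conorm divisors back to degrees of the original divisors in $F_j$. The central ingredient I will need is the standard fact that the conorm map scales degrees by the extension degree, i.e.\ $\deg(\operatorname{Con}_{F_{j+1}/F_j}(M)) = [F_{j+1}:F_j]\,\deg(M)$ for any divisor $M$ of $F_j$. This identity is a direct consequence of the fundamental relation $\sum_{P'\mid P} e(P'\mid P)\, f(P'\mid P) = [F_{j+1}:F_j]$ together with the definition of the conorm in Eq.~\eqref{eq:conorm}, since $\deg(\operatorname{Con}_{F_{j+1}/F_j}(P)) = \sum_{P'\mid P} e(P'\mid P)\deg(P') = \deg(P)\sum_{P'\mid P} e(P'\mid P)f(P'\mid P)$.

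For part (1), I would use the assumption that every place $P\in\operatorname{supp}(D_j)$ splits completely in $F_{j+1}/F_j$. Complete splitting means $e(P'\mid P)=1$ and $f(P'\mid P)=1$ for all $P'\mid P$, so there are exactly $[F_{j+1}:F_j]$ rational places above each $P$, each contributing $1$ to the degree. Summing over the $n_j$ rational places in $D_j$ gives $n_{j+1}=\deg(D_{j+1})=[F_{j+1}:F_j]\,n_j$ directly.

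For part (2), I first apply the Riemann--Roch formula as stated in Eq.~\eqref{eq:ag-code-parameters}, which yields the equality $k_{j+1}=\deg(G_{j+1})+1-g_{j+1}$ under the hypothesis $2g_{j+1}-2<\deg(G_{j+1})<n_{j+1}$. Then I invoke the conorm degree identity above to replace $\deg(G_{j+1})=\deg(\operatorname{Con}_{F_{j+1}/F_j}(G_j))$ by $[F_{j+1}:F_j]\deg(G_j)$. For part (3), the Singleton-like distance bound $d_{j+1}\geq n_{j+1}-\deg(G_{j+1})$ from Eq.~\eqref{eq:ag-code-parameters} applies verbatim to $C_{j+1}$, and substituting the expressions from (1) and (2) gives the stated factorization $[F_{j+1}:F_j](n_j-\deg(G_j))$.

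The only step requiring care is the conorm degree identity, which is really the crux of all three items; once it is in hand, the rest is bookkeeping. I do not anticipate any obstacle beyond verifying that the hypothesis on $\deg(G_{j+1})$ in (2) is inherited naturally in the recursive construction, but since this is an assumption in the lemma rather than something to be proved, the argument reduces to a direct application of the definitions and the Riemann--Roch machinery of Sec.~\ref{sec:basic-notions}.
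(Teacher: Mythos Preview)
Your proposal is correct and matches the paper's approach exactly: the paper's proof is a one-liner invoking Eq.~\eqref{eq:ag-code-parameters} together with the conorm degree identity $\deg(\operatorname{Con}_{F_{j+1}/F_j}(A)) = [F_{j+1}:F_j]\deg(A)$, which is precisely the ``central ingredient'' you isolate. Your additional justification of this identity via the fundamental relation $\sum_{P'\mid P} e(P'\mid P) f(P'\mid P) = [F_{j+1}:F_j]$ and your explicit use of complete splitting in part~(1) supply detail the paper omits, but the underlying argument is the same.
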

\begin{proof}
   These statements follow from~\ref{eq:ag-code-parameters} and from the fact that for a divisor $A \in \operatorname{Div}(F_j)$, $\operatorname{deg}(\operatorname{Con}_{F_{j+1}/F_j} (A)) = [F_{j+1}: F_j] \operatorname{deg} (A)$.
 
\end{proof}

To end this discussion, we point that out there exists an alternative lifting method (provided in Appendix~\ref{app:alt-lift}) similar to that used in Ref.~\cite{Chara2024}. 
While such a lifting improves the dimension of the code, this improvement will be sub-leading for the family of codes we will build. 
Therefore, we will use the simpler lifting to analyze the asymptotic behavior of our codes. 
\begin{comment}
To end this discussion, we point that out there exists an alternative lifting method similar to that used in Ref.~\cite{Chara2024}. Indeed, we may lift the $G_j$ divisor by
\begin{align}
    G_{j+1} := \operatorname{Con}_{F_{j+1}/F_j} (G_j) + \Delta(F_{j+1}/F_j)
\end{align}
where
\begin{align}
    \Delta(F_{j+1}/F_j) :=  \sum\limits_{P \in \mathbb{P}(F_j)} \sum\limits_{P'|P} \left\lfloor \frac{d(P'|P)}{3} \right\rfloor P'
\end{align}
 and $d(P'|P)$ is the different exponent of $P'|P$. The divisors $D_{j+1}$ and $G_{j+1}$ are disjoint by construction, because a place $P' \in \operatorname{supp}D_{j+1}$ has $d(P'|P) = e(P'|P) - 1 = 0$ by Dedekind's different formula (Thm. 3.5.1 in Ref.~\cite{Stichtenoth2009}). Therefore, this lifting provides a valid AG code $C_{j+1} := C_\mathcal{L}(D_{j+1}, G_{j+1})$. Crucially, it is also triorthogonality-preserving since
\begin{align}
    3 \Delta(F_{j+1}/F_j) \leq \operatorname{Diff}(F_{j+1}/F_j).
\end{align}
This lifting improves the dimension of the code compared to the previous lifting in Def.~\ref{def:lifting}. However, for the family of codes we will build, this improvement will be sub-leading and consequently will not affect the asymptotic behavior of our codes. Therefore, we will use the simpler lifting.
\end{comment}

\subsection{The Family of Codes}

% \subsubsection{Tower of function fields}\label{sec:tower}
The family of codes $\{C_j\}_{j \rightarrow \infty}$ is constructed from a tower  $\mathcal{W}$ of function fields over $\mathbb{F}_{q}$ with $q = r^2$. Using this tower, we show that for $r \geq 8$, there exists a good family of codes $\{C_j\}_{j \rightarrow \infty}$ in which each code $C_j$ is triorthogonal. The tower we use is described in Refs.~\cite{agcodes-advanced,Stichtenoth2009}. %\Diyi{same tower as equ 32 in both references? more detailed reference?}

\begin{definition}
    \label{def:tower}
    Let $\mathcal{W} = (F_0, F_1, F_2, \ldots)$ be the tower of function field over $\mathbb{F}_{q}$ with $q = r^2$ where 
    \begin{enumerate}
        \item $F_0 := \mathbb{F}_q(x_0)$ is the rational function field.
        \item $F_{j+1} := F_{j}(x_{j+1})$ and each $x_j$ and $x_{j+1}$ satisfy the relation
        \begin{align}
            x^{r}_{j+1} + x_{j+1} = \frac{x^r_j}{x^{r-1}_j + 1}.
        \end{align}
    \end{enumerate}
\end{definition}
%\Diyi{maybe briefly comment why this process can keep running without get into a closed field.}
Many useful properties of this tower are already known~\cite[Chapter 7]{agcodes-advanced}. 
\begin{proposition}[Properties of tower]
    Consider the tower $\mathcal{W}$ in \Cref{def:tower}.
    \begin{enumerate}
        \item For each $j \geq 0$, the extension $F_{j}/F_0$ has degree $[F_{j} : F_0] = r^j$.
        \item The rational places in $F_0$ form the set
        \begin{align}
            \mathcal{Z} := \left\{P_{\alpha} : \alpha \in \mathbb{F}_{r^2} \backslash \mathbb{F}_{r} \right\}, 
        \end{align}
        and each splits completely in all extensions $F_{j}/F_0$. This set has cardinality $|\mathcal{Z}| = r(r-1)$.
        \item The ramification locus of the tower is
        \begin{align}
            \mathcal{V} := &\left\{P_{\alpha} : \alpha \in \mathbb{F}_r \right\} \cup \{P_{\infty}\}.
        \end{align}
        This set has cardinality $|\mathcal{V}| = r + 1$.
        % and their different exponents are
        % \begin{align}
        %     d(P_{\alpha}) = 2(r-1).
        % \end{align}
        \item The genus $g_j := g(F_j)$ of the function field $F_j$ is
        \begin{align}
            \label{eq:genus}
            g_j &= \begin{cases}
                \left(r^{\frac{j+1}{2}} - 1\right)^2 &\text{if \,$j$\, is odd}\\
                \left(r^\frac{j}{2} - 1\right)\left(r^{\frac{j+2}{2}} - 1\right) &\text{otherwise}
            \end{cases}
        \end{align}
        Often, we will use the expression
        \begin{align}
            \label{eq:1-genus}
            1 - g_j &= \begin{cases}
                -r^{j+1} + 2 r^{\frac{j+1}{2}} &\text{if \,$j$\, is odd}\\
                 -r^{j+1} + r^{\frac{j}{2}} (1 + r) &\text{otherwise},
            \end{cases}
        \end{align}
        which can be derived from \eq{genus}.
    \end{enumerate}
\end{proposition}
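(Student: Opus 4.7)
The plan is to prove each of the four statements by exploiting the Artin--Schreier-type structure of the extensions $F_{j+1}/F_j$. In characteristic $2$, since $r = 2^m$, the defining polynomial $\varphi(T) := T^r + T - y_j$ with $y_j = x_j^r/(x_j^{r-1}+1)$ has derivative $1$, hence is separable; each step is a degree-$r$ Artin--Schreier extension provided $\varphi$ is irreducible over $F_j$. For Statement 1 I would argue irreducibility by exhibiting a place of $F_j$ at which $y_j$ has a pole of order coprime to $r$, so that the Artin--Schreier class of $y_j$ in $F_j/\wp(F_j)$ is nontrivial (where $\wp(z) = z^r + z$); one can check that the place above $P_\infty$ is such a place, and then proceed by induction on $j$ to conclude $[F_j : F_0] = r^j$.

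For Statement 2, I would show that at $P_\alpha$ with $\alpha \in \mathbb{F}_{r^2} \setminus \mathbb{F}_r$, the residue $\beta := \alpha^r/(\alpha^{r-1}+1) \in \mathbb{F}_{r^2}$ satisfies $\mathrm{Tr}_{\mathbb{F}_{r^2}/\mathbb{F}_r}(\beta) = 0$. A direct computation using $\alpha^{r^2} = \alpha$ simplifies $\beta + \beta^r$ and shows it vanishes exactly because $\alpha \notin \mathbb{F}_r$. By the standard Artin--Schreier splitting criterion, the place $P_\alpha$ then splits completely in $F_1/F_0$ into $r$ rational places, each of which is again of the form $P_{\alpha'}$ with $\alpha' \in \mathbb{F}_{r^2} \setminus \mathbb{F}_r$; this self-similarity allows an induction up the tower. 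The count $|\mathcal{Z}| = r^2 - r = r(r-1)$ then follows from $|\mathbb{F}_{r^2} \setminus \mathbb{F}_r|$.

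For Statement 3, the poles of $y_0 = x_0^r/(x_0^{r-1}+1)$ in $F_0$ occur exactly at the zeros of $x_0^{r-1} + 1$, i.e.~$P_\alpha$ with $\alpha \in \mathbb{F}_r^\times$, together with a pole at $P_\infty$, and one additionally checks ramification at $P_0$ because of how $y_0$ behaves at $x_0 = 0$. At any such place, the Artin--Schreier extension is wildly ramified; away from this locus the extension is unramified, and the recursive form of the defining equation shows the ramification locus is preserved up the tower. This yields $\mathcal{V} = \{P_\alpha : \alpha \in \mathbb{F}_r\} \cup \{P_\infty\}$ and $|\mathcal{V}| = r + 1$.

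For Statement 4, I would apply the Hurwitz genus formula successively, using the fact that for an Artin--Schreier extension $T^r + T = y$ with $y$ having a pole of order $m$ coprime to $p$ at a place $P$, the different exponent at each place above $P$ is $(r-1)(m+1)$. Summing these contributions across the $r+1$ ramified places of Statement 3 at each step and iterating the recursion yields the closed-form expression in \eq{genus}; the even/odd dichotomy arises from the distinct growth of the different at odd vs.~even stages, a hallmark of the Garcia--Stichtenoth tower. The hardest part will be the careful bookkeeping of pole orders under the recursion $y_{j+1} = x_{j+1}^r/(x_{j+1}^{r-1}+1)$ and the corresponding different exponents; however, since the genus of this particular tower is explicitly computed in Ref.~\cite[Chapter 7]{agcodes-advanced}, I would cite that computation and limit my own effort to verifying that the closed form in \eq{1-genus} follows from \eq{genus} by a one-line algebraic manipulation.
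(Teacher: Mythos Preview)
The paper does not prove this proposition at all: it is prefaced by the sentence ``Many useful properties of this tower are already known~\cite[Chapter 7]{agcodes-advanced}'' and then stated without any argument. Your sketch therefore goes well beyond what the paper itself provides; you are essentially reconstructing the standard Garcia--Stichtenoth analysis that the paper is citing. In that sense your approach is correct and matches the literature the paper points to, just not the paper's own (non-)proof.

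One minor caution on your sketch: each step $x_{j+1}^{r}+x_{j+1}=y_j$ with $r=2^{m}>2$ is an elementary abelian $2$-extension of degree $r$, not a simple degree-$p$ Artin--Schreier extension, so the splitting criterion and the different-exponent formula $(r-1)(m+1)$ you quote for Statement~4 require the generalized (additive-polynomial) version rather than the basic $T^{p}-T=y$ case. Since you already plan to defer the genus computation to \cite[Chapter 7]{agcodes-advanced}, this does not derail your plan, but be careful not to state the degree-$p$ formula as if it applied verbatim.
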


% \subsubsection{Good Family of Triorthogonal Codes from the tower}

Now, we are ready to present our family of triorthogonal codes. Our construction is recursive in a simple form: we first use the rational function field $F_0 := \mathbb{F}_q(x_0)$ to construct the base code $C_0$, and then obtain each subsequent code $C_{j+1}$ from $C_j$ via the triorthogonality-preserving lifting operation. %\Diyi{Still, $x_0$ transcendental? notation-wise people use $x_0$ usually to denote algebraic extension.}
Moreover, the parameters of all codes in the sequence can be computed or bounded from those of $C_0$. We summarize this result in the theorem below. %\Diyi{why we started chaning into $r^2$ instead of $q^2$? Is that a different thing? }

\begin{theorem}[Good family of triorthogonal codes]
    For $r = 2^m$ with $m \geq 3$, there exists a good family of triorthogonal codes $\left\{C_j\right\}_{j \rightarrow \infty}$ over the alphabet $\mathbb{F}_{r^2}$ with parameters $\left[n_j, k_j, d_j\right]$ given by
    %\Edison{Need to check the math}
    \begin{align}
        n_j &= r^{j}(r^2-r)\notag\\
        k_j &= r^{j} (r+1) \left\lfloor \frac{r-2}{3} \right\rfloor + 1 - g_j\notag\\
        d_j &\geq  \frac{2}{3} r^j (r^2 - r + 2).
        \label{eq:good-triorth-codes}
    \end{align}

    One explicit family realizing these parameters is defined recursively by
    \begin{align}
    C_{0} &= C_{\mathcal{L}}(D_0, G_0),\notag\\
    C_{j+1} &= C_{\mathcal{L}}(\operatorname{Con}_{F_{j+1}/F_j}(D_j), \operatorname{Con}_{F_{j+1}/F_j}(G_j)),
    \end{align}
    where 
    \begin{align}
    D_0 = \sum\limits_{P_{\alpha} \in \mathcal{Z}} P_{\alpha}, \;\;\; G_0 = \left\lfloor \frac{r-2}{3} \right\rfloor \sum_{Q_{\alpha} \in \mathcal{V}} Q_{\alpha}.
\end{align}
and $F_j \in \mathcal{W}$, the tower of Def.~\ref{def:tower}. 
\end{theorem}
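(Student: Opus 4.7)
The plan is to construct the family inductively: verify triorthogonality of the base code $C_0$ by a direct calculation in the rational function field $F_0$, then apply Theorem~\ref{thm:lifting} to propagate triorthogonality up the tower $\mathcal{W}$ of Def.~\ref{def:tower}, and finally read off the parameters by combining Lemma~\ref{lemma:lifting-parameters} with the known invariants of the tower (the degrees $[F_j:F_0]=r^j$ and genera $g_j$ from \eqref{eq:genus}).

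The first task is to verify $C_0=C_{\mathcal{L}}(D_0,G_0)$ is triorthogonal. The divisors are disjoint since $\mathcal{Z}\cap\mathcal{V}=\emptyset$, the condition $G_0\geq 0$ is immediate, and by Lemma~\ref{lemma:triorthogonal-ag-codes} the only nontrivial check is $3G_0\leq D_0+(\eta_0)$. I would take
\begin{equation*}
z \;=\; \prod_{\alpha\in\mathcal{Z}}(x_0-\alpha) \;=\; \frac{x_0^{r^2}-x_0}{x_0^{r}-x_0} \in F_0,
\end{equation*}
which satisfies $v_{P_\alpha}(z)=1$ for every $\alpha\in\mathcal{Z}$ as required by Prop.~\ref{prop:ag-dual}, and evaluate $\eta_0=dz/z$ place-by-place using $(dx_0)=-2P_\infty$ together with the logarithmic-derivative formula $z'/z=\sum_{\alpha\in\mathcal{Z}}1/(x_0-\alpha)$, which in characteristic two collapses to $1/(x_0^{r^2}-x_0)+1/(x_0^r-x_0)$ after the Frobenius identities $r^2 x_0^{r^2-1}=r x_0^{r-1}=0$. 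The expected outcome is
\begin{equation*}
(\eta_0) \;=\; -\sum_{\alpha\in\mathcal{Z}} P_\alpha \;+\; (r-2)\sum_{Q\in\mathcal{V}} Q,
\end{equation*}
whose total degree $-r(r-1)+(r+1)(r-2)=-2$ matches $2g_0-2$, as any canonical divisor must. With this in hand $D_0+(\eta_0)=(r-2)\sum_{Q\in\mathcal{V}} Q$ and $3G_0=3\lfloor(r-2)/3\rfloor\sum_{Q\in\mathcal{V}} Q\leq D_0+(\eta_0)$, so Lemma~\ref{lemma:triorthogonal-ag-codes} yields triorthogonality of $C_0$.

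The inductive step is then painless. Each $P_\alpha\in\mathcal{Z}$ splits completely in every extension $F_{j+1}/F_j$ of the tower, so Def.~\ref{def:lifting} applies, and Theorem~\ref{thm:lifting} propagates the conditions $G_j\geq 0$ and $3G_j\leq D_j+(\eta_j)$ to $G_{j+1}$; by induction on $j$ every $C_j$ is triorthogonal. For the parameters, Lemma~\ref{lemma:lifting-parameters} together with $[F_j:F_0]=r^j$ gives $n_j=r^j(r^2-r)$ and $\deg(G_j)=r^j(r+1)\lfloor(r-2)/3\rfloor$, and I would check $2g_j-2<\deg(G_j)<n_j$ using \eqref{eq:genus} (valid because $r=2^m$ with $m\geq 3$), so Riemann--Roch yields $k_j=\deg(G_j)+1-g_j$ and $d_j\geq n_j-\deg(G_j)$ in the form stated. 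Goodness then follows because $n_j$, $\deg(G_j)$, and $g_j$ all scale as $\Theta(r^j)$ with non-cancelling leading coefficients, so $k_j/n_j$ and $d_j/n_j$ tend to strictly positive limits and, after elementary estimation, their sum satisfies \eqref{eq:tvz-summary}.

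The principal technical obstacle is the characteristic-two computation of $(\eta_0)$. In odd characteristic the same logarithmic-derivative argument would produce the naive divisor $-D_0-P_\infty$ of total degree $-n_0-1$; in characteristic two, by contrast, the $r$-fold Frobenius annihilates the expected leading terms of $z'/z$ at $P_\infty$ and at each place of the ramification locus $\mathcal{V}$, creating an extra zero of order $r-2$ at every place of $\mathcal{V}$. Carefully tracking these cancellations---and using the degree identity $-r(r-1)+(r+1)(r-2)=-2$ as a consistency check---is what ultimately lets $G_0$ carry the factor $\lfloor(r-2)/3\rfloor$ that drives the rate past the TVZ bound.
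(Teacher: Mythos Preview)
Your proposal is correct and follows essentially the same route as the paper: explicitly compute $(\eta_0)=-D_0+(r-2)\sum_{Q\in\mathcal V}Q$ in the rational function field (the paper does this in Appendix~\ref{app:divisor} by directly differentiating $z_0=(x_0^{r^2}+x_0)/(x_0^r+x_0)$, which is your logarithmic-derivative computation repackaged), invoke Lemma~\ref{lemma:triorthogonal-ag-codes} for triorthogonality of $C_0$, then lift via Theorem~\ref{thm:lifting} and read off parameters using Lemma~\ref{lemma:lifting-parameters} together with the genus formula~\eqref{eq:genus}. Your degree-count sanity check $\deg(\eta_0)=2g_0-2=-2$ and your remark on the characteristic-two Frobenius cancellations at the places of $\mathcal V$ are exactly the mechanism behind the paper's Appendix~\ref{app:divisor} computation.
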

%\Diyi{why use alphabet?}
\begin{proof}
We prove this theorem by construction. By Thm.~\ref{thm:lifting}, it is sufficient to explicitly construct the base code $C_0 := C_\mathcal{L}(D_0, G_0)$. We use the set $\mathcal{Z}$ to construct the divisor $D_0$ and the set $\mathcal{V}$ to construct $G_0$. Set
\begin{align}
    D_0 &:= \sum_{P\in \mathcal{Z}} P && 
    G_0 := \sum_{Q \in \mathcal{V}} a_Q Q 
\end{align}
where %$P_j \in \mathcal{Z}$, $Q_j \in \mathcal{V}$ and 
the coefficients $a_Q \in \mathbb{Z}$ will be chosen so  that $C_0$ is triorthogonal.

From Ref.~\cite[Prop.~8.1.2]{Stichtenoth2009}, a differential $\eta_0$ as in Prop.~\ref{prop:ag-dual} is given by
\begin{align}
    \eta_0 := \frac{dt_0}{t_0}, \quad t_0 := \prod\limits_{\alpha \in \mathcal{Z}} (x_0 + \alpha).
\end{align}
A direct calculation gives the canonical divisor (outlined in Appendix~\ref{app:divisor}) 
\begin{align}
    (\eta_0) = - D_0 +  (r - 2) \sum\limits_{Q \in \mathcal{V}} Q.
\end{align}
%where $P_0$ is the rational place $P_{\alpha}$ corresponding to $\alpha = 0$. 
Therefore, we can ensure $C_0$ is triorthogonal by setting $a_Q := \lfloor (r-2)/3\rfloor$ for $r \geq 8$. Hence, the base code is $C_0 := C_{\mathcal{L}}(D_0, G_0)$
with
\begin{align}
    D_0 &= \sum\limits_{P \in \mathcal{Z}} P,\notag\\
    G_0 &= \left\lfloor \frac{r-2}{3} \right\rfloor \sum_{Q \in \mathcal{V}} Q.
\end{align}
The parameters of $C_0$ are
\begin{align}
    n_0 &= r(r-1),\notag\\
    k_0 &= \operatorname{deg}(G_0) + 1 = (r + 1)  \left\lfloor \frac{r-2}{3} \right\rfloor + 1,\notag\\
    d_0 &= r(r-1) - (r + 1) \left\lfloor \frac{r-2}{3} \right\rfloor.
    \label{eq:base-code-params}
\end{align}
For example, for $r = 8$, $C_0$ is a triorthogonal code with parameters $[56, 19, 38]_{8^2}$.

We now define the family by successive lifting. Since $P \in \mathcal{Z}$ and $Q \in \mathcal{V}$, we can ensure that for each $j$:
\begin{enumerate}
    \item Every place in $\operatorname{supp}(D_j)$ splits completely in the extension $F_{j+1}/F_j$. %so that the places in $\operatorname{supp}(D_{j+1})$ are all rational places
    \item The divisors $D_{j+1}$ and $G_{j+1}$ are disjoint.
\end{enumerate}
Therefore, each $C_j$ can be lifted to $C_{j+1}$ by Def.~\ref{def:lifting}:
\begin{align}
    C_{0} &= C_{\mathcal{L}}(D_0, G_0),\notag\\
    C_{j+1} &= C_{\mathcal{L}}(\operatorname{Con}_{F_{j+1}/F_j}(D_j), \operatorname{Con}_{F_{j+1}/F_j}(G_j)),
\end{align}
with $F_j \in \mathcal{W}$. 
By Thm.~\ref{thm:lifting},  each $C_j$ is triorthogonal. 
\begin{comment} 
We now estimate the parameters of each code $C_j$ and show that the sequence is asymptotically good. 
%
From the first statement of Lemma~\ref{lemma:lifting-parameters} 
%allows us to relate the parameters of $C_j$ to those of $C_0$ in a simple manner. Indeed, the first statement of Lemma~\ref{lemma:lifting-parameters} immediately yields
\begin{align}
    n_j = r^j n_0 = r^{j+1}(r-1).
\end{align}
For the distance $d_{j}$, the last statement of Lemma~\ref{lemma:lifting-parameters} implies
\begin{align}
    d_{j} &\geq n_j - \operatorname{deg}G_j\notag\\
    &\geq r^j\left(r(r-1) - (r+1)\frac{r-2}{3} \right)\notag\\
    &= \frac{2}{3} r^j (r^2 - r + 2)
\end{align}

For the dimension $k_{j}$, a useful fact is that for our function fields tower $\mathcal{W}$ (Def.~\ref{def:tower}), we have $2g_j - 2 < \operatorname{deg}(G_j) \leq n_j$ (see App.~\ref{app:hank-lemma}).

By the third statement of Lemma~\ref{lemma:lifting-parameters}, it  follows that the dimension of the $C_j$ is
\begin{align}
    k_j &= \operatorname{deg}(G_j) + 1 - g_j\notag\\
    &= r^j (r+1) \left\lfloor \frac{r-2}{3} \right\rfloor  + 1 - g_j.
\end{align}
\mjc{The math steps of Eqs.~(42-45) as well as Appendix B seem straightforward enough that we don't need to explicitly show them. Perhaps referencing} 
\end{comment}
Lemma~\ref{lemma:lifting-parameters} in combination with Eq.~\eqref{eq:base-code-params} and
\begin{align}
    \operatorname{deg}&G_j>  2g_j - 2 \text{ for } r \geq 8.
\end{align}
make clear that Eq.~\eqref{eq:good-triorth-codes} holds. %To bound the dimension $k_j$ of our code, note that 
% \begin{align}
%     \operatorname{deg}&G_j>  2g_j - 2 \text{ for } r \geq 8.
% \end{align}
% On the other hand, the length and the dimension are given by
% \begin{align}
%     n_j &= r^j\,n_0 = r^{j+1}(r-1)\notag\\
%     d_j &\geq n_j - r^{j}\operatorname{deg}(G_0)\notag\\
%     &= r^{j+1}(r-1) - r^{j+1}\left\lfloor \frac{r-2}{3} \right\rfloor\notag\\
%     &= r^{j+1} \left( r - 1 - \left\lfloor \frac{r-2}{3} \right\rfloor\right).
% \end{align}
% Simplifying the distance further, we obtain
% \begin{align}
%     d_j &\geq r^{j+1} \left(r - 1 - \frac{r-2}{3}\right)\notag\\
%     &= r^{j+1}\left(\frac{2r - 1}{3}\right).
% \end{align}

Now, it is easy to show that this family is asymptotically good, i.e. the ratios $d_j/n_j$ and $k_j/n_j$ do not vanish. 
\begin{align}
    \label{eq:dist-limit}
    \lim_{j\rightarrow\infty} \frac{d_j}{n_j} &\geq \lim_{j\rightarrow\infty} \frac{r^{j}(r^2 - r + 2)}{r^{j}(r^2-r)}\notag\\
    % &= \frac{2r - 1}{3(r-1)}\notag\\
    &= \frac{2}{3} + \frac{4}{3r(r-1)}
\end{align}
% \begin{align}
%     \label{eq:dist-limit}
%     \lim_{j\rightarrow\infty} \frac{d_j}{n_j} &\geq \lim_{j\rightarrow\infty} \frac{r^{j+1}\left(\frac{2r - 1}{3}\right)}{r^{j+1}(r-1)}\notag\\
%     % &= \frac{2r - 1}{3(r-1)}\notag\\
%     &= \frac{2}{3} + \frac{1}{3(r-1)}
% \end{align}
For the dimension, we can use Eq~(\ref{eq:1-genus}) to get
\begin{align}
    \label{eq:rate-limit}
    \lim_{j\rightarrow\infty} \frac{k_j}{n_j} 
    % &=
     % \lim_{j\rightarrow\infty} \frac{r^{j+1}\left\lfloor \frac{r-2}{3} \right\rfloor + 1 - g_j}{r^{j+1}(r-1)}\notag\\
    &= \lim_{j\rightarrow\infty} \frac{r^{j}\left[(r+1)\left\lfloor \frac{r-2}{3} \right\rfloor - 1\right]}{r^{j}r(r-1)}\notag\\
    % &=  \frac{\left\lfloor \frac{r-2}{3} \right\rfloor}{(r-1)} - \frac{1}{r-1}\notag\\
    % &\geq \frac{r-4}{3(r-1)} - \frac{1}{r-1}\notag\\
    % &= \frac{r-7}{3(r-1)}\notag\\
    % &= \frac{1}{3} - \frac{2}{3(r-1)} > 0 \text{ for } r \geq 8.
    &\geq \frac{1}{3} - \frac{2r+7}{3r(r-1)} > 0 \text{ for } r \geq 8.
\end{align}
\end{proof}

These limits show that our codes have very good encoding rates for their relative distance. More precisely, these codes explicitly satisfy the TVZ bound for $r\geq8$:
\begin{align}
    \lim_{j\rightarrow\infty} \frac{k_j}{n_j} + \frac{d_j}{n_j} \geq 1 - \frac{2r + 3}{3r(r-1)} > 1 - \frac{1}{r-1}.
\end{align}
Thus, there exists a good family of triorthogonal codes $\{C_j\}$ exceeding the TVZ bound on finite Fields $\mathbb{F}_{r^2}$ for $r \geq 8$. Moreover, it can be verified by direct computations that these codes are also above the GV bound, thereby scaling better than random codes in all regimes. As we will see, such large encoding rates are beneficial when converting to quantum codes. Indeed, the dimension of the quantum code will be $\mathcal{K} \leq k$ and the length $\mathcal{N} = n - \mathcal{K}$. Hence, a large encoding rate in the classical code yields the possibility of quantum codes with correspondingly large encoding rates. In fact, a large encoding rate in the classical code will result in a large flexibility to either increase the distance of the quantum code or increase its dimension instead, as desired. 

\section{Quantum Codes}
\label{sec:quantum-codes}

In this section, we outline how to obtain qudit CSS codes from the previously constructed family of classical triorthogonal codes that support a transversal $CCZ$ gate. We first briefly review the basic elements of quantum computation on qudits of dimension $q = 2^m$, $m \geq 1$, that are needed for constructing error-correcting codes. %\Diyi{both q and d }

The single-qudit Hilbert space is simply %$\mathcal{H}_q$ is defined as
\begin{align}
    \mathcal{H}_q := \operatorname{span}\left\{ \ket{\alpha} \;|\; \alpha \in \mathbb{F}_q\right\}.
\end{align}
The qudit Pauli gates $X^\beta$ and $Z^{\beta}$ ($\beta \in \mathbb{F}_q)$ act as
\begin{align}
    X^{\beta} \ket{\alpha} &= \ket{\alpha + \beta},\notag\\
    Z^{\beta}\ket{\alpha} &= (-1)^{\operatorname{Tr}_{\mathbb{F}_q / \mathbb{F}_2}(\alpha\beta)} \ket{\alpha}.
    \label{eq:qudit-paulis}
\end{align}
Here, we have used the field trace operator $\operatorname{Tr}_{\mathbb{F}_{r^n}/\mathbb{F}_{r}}(\cdot):  \mathbb{F}_{r^n} \to \mathbb{F}_r$ defined by
\begin{align}
    % \operatorname{Tr}_{\mathbb{F}_{r^n}/\mathbb{F}_{r}} &\longrightarrow \mathbb{F}_{r},\\ 
    \operatorname{Tr}_{\mathbb{F}_{r^n}/\mathbb{F}_{r}}(x) &:= x + x^r + x^{r^2} + \cdots + x^{r^{n-1}}
\end{align} 
to describe these phase gates. 
%\mjc{[Point of mathematical confusion: if tr only takes two values, then how could a qudit-$Z$ give complex phases? Shouldn't it?]}\Edison{I think the notion of Clifford group in the quantum computing literature might be different from how it is defined in other contexts.}
% When $q = 2^m$, we will always use the field trace $\operatorname{Tr}_{\mathbb{F}_q/\mathbb{F}_2}$.
% For the remainder of this text, we will use the field trace with $r=2$, $n=m$. % 
These gates generate the qudit Pauli group $\mathcal{P}\subset U(q)$, whose elements take the form $i^a X^{\alpha} Z^\beta$ where $a \in \{0,1,2,3\}$ and $\alpha, \beta \in \mathbb{F}_q$. 
%\mjc{Same point here: The typical qudit-Pauli group includes the $2^m$ roots of unity?} \Edison{Same thing.}
The normalizer of the qudit Pauli group is the qudit Clifford group $\mathcal{C}$:
\begin{align}
    \mathcal{C} = \left\{U\in U(q) \mid U P U^\dagger \in \mathcal{P} \; \text{for all} \; P \in \mathcal{P} \right\}.
\end{align}

This qudit Clifford group and a non-Clifford gate element form a universal gate set for qudit-based quantum computation. For our construction of quantum codes, we consider the $CCZ$ gate, a non-Clifford gate defined by
\begin{align}
    CCZ \ket{x}\ket{y}\ket{z} = (-1)^{\operatorname{Tr}_{\mathbb{F}_{q} / \mathbb{F}_2}(xyz)} \ket{x}\ket{y}\ket{z}.
\end{align}
% As for qubits, 
This gate lies in the third level of the Clifford hierarchy, implying a $\ket{CCZ}$ state given in $\mathcal{H}_q^{\otimes 3}$ by 
\begin{align}
    \ket{CCZ} &:= CCZ\ket{+}\ket{+}\ket{+}\notag\\ &=\frac{1}{q^{3/2}} \sum_{x,y, z \in \mathbb{F}_q} (-1)^{\operatorname{Tr}_{\mathbb{F}_q / \mathbb{F}_2}(x y z)} \ket{x}\ket{y}\ket{z},
\end{align}
where $\ket{+}$ is the uniform superposition of basis states. The state $\ket{CCZ}$ can be injected using a Clifford circuit via gate teleportation \cite{Knill:2004ctr}. So, the task of MSD in this paper is to distill this state. The next sections will describe the quantum codes that we will build for this procedure.

We can then discuss how to obtain a quantum code $\mathcal{Q}$ with parameters $[[\mathcal{N},\mathcal{K},\mathcal{D}]]_{r^2}$ from a triorthogonal code $C : [n,k,d]_{r^2}$ following a standard procedure~\cite{Nguyen:2024qwg,PhysRevLett.123.070507}. 
In short, the $k\times n$ generator matrix of a triorthogonal code $C$ can be put in the convenient form
\begin{align}
    \label{eq:gen-matrix}
    \mathcal{G}(C) = \begin{pmatrix}
        \mathbb{1}_{\mathcal{K}} & H_1\\
        0 & H_0
    \end{pmatrix},
\end{align}
where $\mathcal{K} \leq k$. 
Because $C$ is triorthogonal, the %rows of $H_1$ and $H_0$ are independent, and their respective 
row vector subspaces, $\mathcal{H}_0 := \operatorname{rowspace}(H_0)$ and $\mathcal{H}_1 := \operatorname{rowspace}(H_1)$, 
% satisfy $\mathcal{H}_0 \cap \mathcal{H}_1 = \{0\}$ 
are orthogonal 
(see, e.g., Ref.~\cite{Nguyen:2024qwg}). 
Let $\mathcal{H} := \operatorname{rowspace}(H_0, H_1)$, so $\mathcal{H}_0 \subset \mathcal{H}$. 
From the qudit Pauli gates defined in Eq.~\eqref{eq:qudit-paulis} combined with these spaces, we can define the quantum CSS code 
\begin{align}
    \mathcal{Q} = \operatorname{CSS}\left(X, \mathcal{H}_0; Z, \mathcal{H}^\perp\right),
\end{align}
which has 
a transversal $CCZ$~\cite[Lemma 2.3]{Nguyen:2024qwg}.

The parameters $\mathcal{N}, \mathcal{K}$, and $\mathcal{D}$ of this quantum code can be estimated from those of the classical code. Clearly, the length is $\mathcal{N} = n - \mathcal{K}$, and $\mathcal{K}$ is the dimension. As a CSS code, its distance is given by  $\mathcal{D} = \operatorname{min}(\mathcal{D}_X, \mathcal{D}_Z)$
where 
\begin{align}
    \mathcal{D}_X &= \operatorname{min}_{f \in \mathcal{H}\backslash\mathcal{H}_0} |f| \geq \operatorname{dist}(\mathcal{H}) \geq d - \mathcal{K},\notag\\
    \mathcal{D}_Z &= \operatorname{min}_{f \in \mathcal{H}^\perp_0\backslash \mathcal{H}^\perp} |f| \geq \operatorname{dist}(\mathcal{H}^\perp_0)
\end{align}
and $d$ is the distance of the classical code. 
Assuming $2g - 2 < \operatorname{deg}G < n$, Eq.~\ref{eq:ag-code-parameters} yields
\begin{align}
    \mathcal{D}_X \geq n - \operatorname{deg}G - \mathcal{K}.
\end{align}
%\Diyi{what is G here again? connect back to classical code?}
Along with this, we can lower bound $\mathcal{D}_Z$. As explained in Ref.~\cite{Nguyen:2024qwg}, the code $\mathcal{H}_0$ can be viewed as an AG code
\begin{align}
    \mathcal{H}_0 = C_{\mathcal{L}}(D', G').
\end{align}
Here the divisor $D' = \sum_{i=\mathcal{K}+1}^{n} P_i$ corresponds to puncturing $\mathcal{K}$ positions, and
$G' = G - \sum_{i=1}^{\mathcal{K}} P_i$.
Equivalently, $\mathcal{H}_0$ is obtained by restricting the Riemann–Roch space $\mathcal{L}(G)$ to functions that vanish at all punctured positions, so that $\mathcal{L}(G') \subset \mathcal{L}(G)$.
Then, using Eq.~\eqref{eq:ag-parameters-dual-code}, we obtain
\begin{align}
    \mathcal{D}_Z \geq\operatorname{dist}(\mathcal{H}^\perp_0) &\geq \operatorname{deg}(G') - (2g - 2)\notag\\
    &= \operatorname{deg}(G) - \mathcal{K} - (2g - 2).
\end{align}
In our case, we will show $\mathcal{D}_Z <\mathcal{D}_X$, implying that the distance of each of our quantum codes satisfy
\begin{align}
    \label{eq:quantum-dist-bound1}
    \mathcal{D} \geq \operatorname{deg}(G) - \mathcal{K} - (2g - 2).
\end{align}

% \subsection{Family of Quantum Codes}

% In this section, we describe the 
In particular, we may derive a family of 
quantum codes $\mathcal{Q}_j$ from the triorthogonal codes $C_j$ constructed in Sec.~\ref{sec:classical-code}. 
% From Ref.~\cite{Nguyen:2024qwg}, the CSS codes constructed from triorthogonal families have a transversal CCZ gate. 
We summarize our quantum codes below:
\begin{theorem}[Good quantum codes with a transversal CCZ gate]
    \label{thm:quantum-codes}
    There exists a good family of quantum codes $\{\mathcal{Q}_j\}$ with transversal CCZ gates on qudits with dimension $q = r^2=2^{2m}$ for $m \geq 3$. Each code $\mathcal{Q}_j$ has parameters

\begin{align}
    \mathcal{N}_j &= r^j (r^2 - r) - \mathcal{K}_j\notag\\
    \mathcal{K}_j &= x_1 \, r^j \left( (r+1) \left\lfloor \frac{r-2}{3} \right\rfloor -2r\right) + x_2 \, v(r,j)\notag\\
    \mathcal{D}_j &\geq (1-x_1)\, r^j \, \left( (r+1) \left\lfloor \frac{r-2}{3} \right\rfloor -2r\right)\notag\\ &\phantom{xx}+ (1 - x_2)\,v(r,j) 
\end{align}
where 
\begin{align}
    v(r,j) := \begin{cases}
        4 r^{\frac{j+1}{2}}\, \text{ if j is odd}\notag\\
        2r^{\frac{j}{2}} \left(1 + r\right)\, \text{ otherwise}
    \end{cases}
\end{align}
and the constants $x_1\in(0,1)$ and $x_2\in[0,1]$ are chosen such that $\mathcal{K}_j$ is an integer.
\end{theorem}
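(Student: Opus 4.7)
\emph{Proof proposal.} The plan is to feed the classical triorthogonal family $\{C_j\}$ just constructed through the CSS machinery of \Cref{sec:quantum-codes}. For each $C_j$, I would first put its $k_j \times n_j$ generator matrix into the block form of \eq{gen-matrix} by choosing a target logical dimension $\mathcal{K}_j \leq k_j$ and placing the corresponding $\mathcal{K}_j \times \mathcal{K}_j$ identity in the upper-left block via row reduction. Invoking Lemma~2.3 of Ref.~\cite{Nguyen:2024qwg} then yields a CSS code $\mathcal{Q}_j := \operatorname{CSS}(X, \mathcal{H}_0; Z, \mathcal{H}^\perp)$ carrying a transversal $CCZ$, with length $\mathcal{N}_j = n_j - \mathcal{K}_j = r^j(r^2-r) - \mathcal{K}_j$ falling out immediately. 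This disposes of the first line of the claimed parameters and the gate property.

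The next step is to parametrize $\mathcal{K}_j$ so that the $\mathcal{D}_j$ formula becomes transparent. I would substitute $\operatorname{deg}(G_j) = r^j(r+1)\lfloor (r-2)/3\rfloor$ and the identity $2(1-g_j) = v(r,j) - 2r^{j+1}$, which one can read off from \eq{1-genus} in either parity, into the $\mathcal{D}_Z$ bound \eq{quantum-dist-bound1}. A direct rearrangement yields
\begin{align}
    \mathcal{D}_j \geq r^j\left[(r+1)\left\lfloor\tfrac{r-2}{3}\right\rfloor - 2r\right] + v(r,j) - \mathcal{K}_j,
\end{align}
which exposes the two natural ``slots'' from which $\mathcal{K}_j$ may draw. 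Setting $\mathcal{K}_j = x_1\, r^j[(r+1)\lfloor (r-2)/3\rfloor - 2r] + x_2\, v(r,j)$ with $x_1 \in (0,1)$ and $x_2 \in [0,1]$ chosen so that $\mathcal{K}_j$ is a nonnegative integer at most $k_j$ reproduces exactly the advertised forms of $\mathcal{K}_j$ and $\mathcal{D}_j$.

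The main technical step will be to justify that $\mathcal{D}_Z$, not $\mathcal{D}_X$, is the binding bound — the surrounding text asserts $\mathcal{D}_Z < \mathcal{D}_X$ without argument. To verify this I would compare $\mathcal{D}_X \geq d_j - \mathcal{K}_j \geq \tfrac{2}{3} r^j(r^2 - r + 2) - \mathcal{K}_j$ against the displayed $\mathcal{D}_Z$ inequality. Upper-bounding the floor by $(r-2)/3$, the leading $r^j$ coefficients are $\tfrac{2}{3}(r^2-r+2)$ versus at most $(r^2 - 7r - 2)/3$, and a short arithmetic check shows the former exceeds the latter by $(r^2 + 5r + 6)/3$ for every $r \geq 8$. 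The subleading $v(r,j)$ correction, which scales only as $r^{(j+1)/2}$ or $r^{j/2+1}$, is easily dominated by this $r^j$ gap for all $j \geq 0$ and $r \geq 8$, so $\mathcal{D}_Z \leq \mathcal{D}_X$ uniformly. The edge cases to double-check are the behaviour of the floor function mod~$3$ and the small-$j$ values where $v(r,j)/r^j$ is largest; both go through for $r \geq 8$.

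Finally, asymptotic goodness follows by letting $j \to \infty$ with $x_1 \in (0,1)$ fixed: the $r^j$-scaling terms in $\mathcal{N}_j$, $\mathcal{K}_j$, and $\mathcal{D}_j$ dominate the $v(r,j)$ contributions, so both $\mathcal{K}_j/\mathcal{N}_j$ and $\mathcal{D}_j/\mathcal{N}_j$ converge to strictly positive constants whose sum recovers the TVZ-exceeding bound $1 - (2r+3)/[3r(r-1)]$ already established for the classical family. The transversal $CCZ$ property is inherited from the triorthogonality of $C_j$ at every step, so only the $\mathcal{D}_X$-versus-$\mathcal{D}_Z$ comparison requires genuine care; everything else is bookkeeping.
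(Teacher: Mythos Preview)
Your proposal is correct and follows essentially the same route as the paper's own proof in \Cref{app:quantum-codes}: both substitute $\operatorname{deg}(G_j)$ and $2(1-g_j)=v(r,j)-2r^{j+1}$ into the $\mathcal{D}_Z$ bound \eq{quantum-dist-bound1}, parametrize $\mathcal{K}_j$ via $x_1,x_2$, and read off $\mathcal{N}_j$ and $\mathcal{D}_j$. You are in fact more explicit than the paper on the $\mathcal{D}_X$-versus-$\mathcal{D}_Z$ comparison (the paper just says it ``can be verified simply by comparing the lower bounds''), and your leading-coefficient check with difference $(r^2+5r+6)/3$ is exactly what that verification amounts to.
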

 
\begin{proof}
   We show the proof in App.~\ref{app:quantum-codes}. 
\end{proof}

% \mjc{This subsection is too short to warrant its own subsection unless we want Appendix D in the main text.}

\section{Explicit Code Constructions}
\label{sec:explicit-codes}

Having demonstrated the existences of new families of asymptotically or nearly asymptotically good codes, in this section we undertake to present explicit results for the properties of some $\mathcal{Q}_j=[[\mathcal{N}_j,\mathcal{K}_j,
\mathcal{D}_j]]_q$ for low $q$ and $j$.  In particular, we are interested in understanding what overheads $\gamma$ and encoding rates $\mathcal{R}=\mathcal{K}_j/\mathcal{N}_j$ are achievable. For reference in this discussion, comparisons will be made to the canonical code used in qubit MSD of $\mathcal{Q}_{BK}=[[15,1,3]]_2$ with $\gamma\approx2.46$ and $\mathcal{R}\approx0.07$~\cite{PhysRevA.71.022316}.

While $\mathcal{R}$ is straightforward once the code is constructed, computing the minimum-weight logical operator---and therefore the distance--for an arbitrary stabilizer code is NP-complete.  For the large $q$ codes that we will be considering, we thus have to suffice with bounds for now.  For lower bounds $\mathcal{D}_j^{min}$, we use Eqs.~(\ref{eq:params2}) and~(\ref{eq:quantum-dist-bound2}).  To estimate upper bounds, for $Q_0$ we can leverage the \textsc{qLDPC} code available at~\cite{perlin2023qldpc} with a workstation. We leave the study of larger codes to future work. 

The lowest $q$ for which we can construct asymptotically good codes is the family of codes with $q=2^6$. Their investigation is further motivated by the experimental demonstration of $q=64$ state preparation, suggesting MSD with these codes is nearer to realization than larger $q$. For $j\leq4$, the bounds on $\gamma$ are presented as a function of $\mathcal{R}$ for all possible codes in our family with $q=2^6$ in Fig.~\ref{fig:bounds8}. 
The first point to observe is that $\gamma^{max}\rightarrow 0$ as $j\rightarrow\infty$ (as anticipated for an asymptotically good family). Moreover, all codes outperform $\mathcal{Q}_{BK}$.  In particular, the best performing code in $Q_0$ is $[[42,14,6]]_{64}$, which has $\mathcal{R}=\frac{1}{3}$ and $\gamma^{max}\approx0.613$. The codes within each $\mathcal{Q}_j$ with the lowest $\gamma^{max}$ are tabulated in Table~\ref{tab:bestcodes}.

\begin{figure}
    \centering
    \includegraphics[width=1\linewidth]{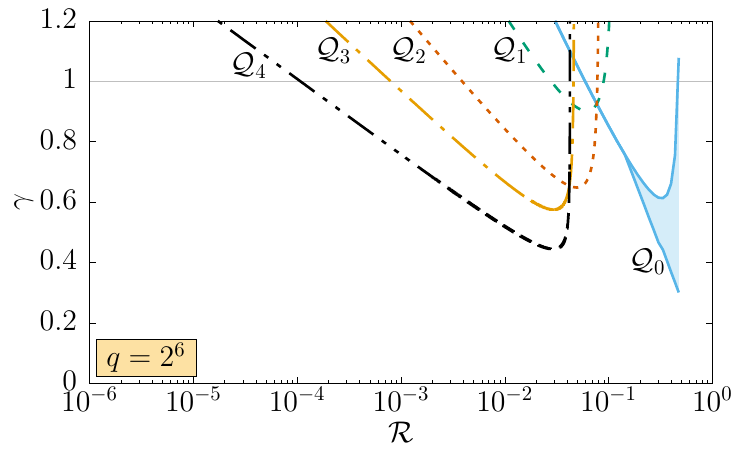}
    \caption{Bounds on $\gamma$ vs. $\mathcal{R}$ for quantum codes $\mathcal{Q}_j$ at different levels of the lifting for $q=2^6=64$.  For $\mathcal{Q}_0$, we present two-sided bounds, while for the rest we have only upper bounds.}
    \label{fig:bounds8}
\end{figure}

Increasing the qudit dimension to $q=2^8$; $\mathcal{N}_j$, $\mathcal{K}_j$, and $\mathcal{D}^{min}_j$ all grow by factors of 3-4.  While the best code in $\mathcal{Q}_0$, $[[188,52,18]]$,  demonstrates a smaller overhead at $\gamma^{max}=0.445$ than the best code in $\mathcal{Q}_4$ of $d=2^6$, it also comes at a cost of less efficient encoding rate $\mathcal{R}$ (See Table~\ref{tab:bestcodes}).  When $j>0$, all $\mathcal{Q}_j$ have lower overhead $\gamma^{max}$ than any code we have constructed with $d=2^6$ and greater encoding rates by a factor of 3.  

\begin{figure}
    \centering
    \includegraphics[width=1\linewidth]{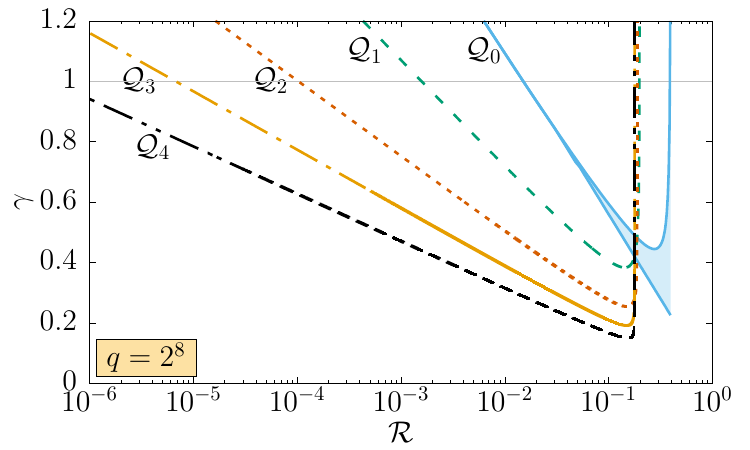}
    \caption{Bounds on $\gamma$ vs. $\mathcal{R}$ for quantum codes $\mathcal{Q}_j$ at different levels of the lifting for $q=2^8=256$.  For $\mathcal{Q}_0$, we present two-sided bounds, while for the rest we have only upper bounds.}
    \label{fig:bounds16}
\end{figure}

The final qudit dimension for which we explicitly construct codes is $q=2^{10}$, so that we may compare with the family of asymptotically good codes identified in Ref.~\cite{Wills:2024wid}.  In that work, the codes had $\mathcal{R}\leq \frac{5}{114}\approx0.04$ and $\mathcal{N}\geq 932,090$ were identified.  In contrast, our families have $\mathcal{R}\geq 0.319$ and $\gamma^{max}\leq0.236$ for the best codes in all $\mathcal{Q}_j$.  Further, it is only in $\mathcal{Q}_2$ where our codes begin to exceed $\mathcal{N}=932,090$.  Instead, our best code in $\mathcal{Q}_0$ requires only $\mathcal{N}=708$.

\begin{figure}
    \centering
    \includegraphics[width=1\linewidth]{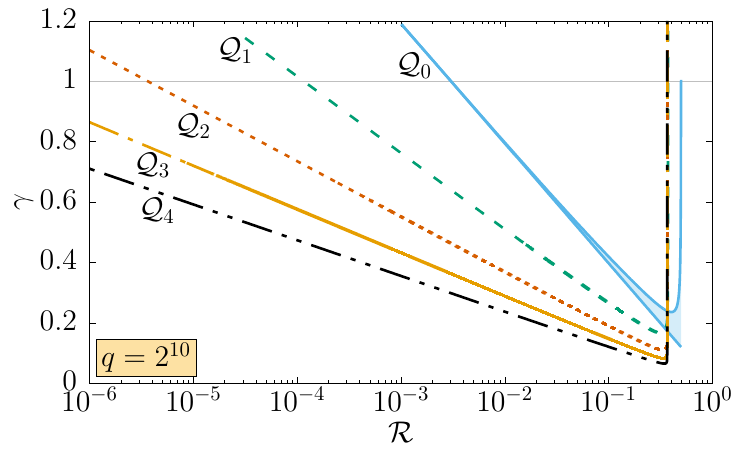}
    \caption{Bounds on $\gamma$ vs. $\mathcal{R}$ for quantum codes $Q_j$ at different levels of the lifting for $q=2^{10}=1024$.  For $\mathcal{Q}_0$, we present two-sided bounds, while for the rest we have only upper bounds.}
    \label{fig:bounds32}
\end{figure}

% The promise of constant overhead is that there is a code high enough in the family that you only need a single round of MSD.  But for practical implementation this may be too ambitious.  instead, using multiple rounds of msd  and small enoguh codes this may be good enough. 
% [summary: even w/o constant overhead, we might do msd w low overhead since code size is not prohibitively large] 
In all such cases considered here, we find low overhead costs to TVZ codes that we have constructed explicitly per the methods of Secs.~\ref{sec:classical-code} and \ref{sec:quantum-codes}. 
% Typically, in constructing codes, one would like to promise constant overhead by selecting a sufficiently large member of a family of codes, so that only a single round of MSD is required; 
Typically, when constructing codes, one aims to guarantee constant overhead by choosing a sufficiently large code from a family, so that only a single round of MSD is required.
However for practical implementation this task may be too ambitious and less cost-efficient overall. 
Indeed, we have constructed codes that are sufficiently small such that they would contribute a low cost even with overhead over multiple rounds of MSD.

One outstanding question is the error threshold for distillation of $CCZ$ from these codes. Prior work on qutrits and ququints found low-overhead codes with thresholds to depolarizing noise as large as $\epsilon\approx 0.1$ exist~\cite{Prakash:2024ghq,Prakash:2025azi,Saha:2025frb}. Given the higher dimension of the qudits here, the many error channels likely result in lower thresholds, which would degrade the benefits from the codes proposed here.  Therefore error thresholds should be investigated while varying $\mathcal{R}$.   Another question would be to quantify the codes' maximum stabilizer weight checks, as prior work has shown them to be potentially non-constant in $\mathcal{N}$.  

\begin{table}[h]
\caption{Summary of the codes $Q_j$ for three values of $r$ with the upper bound $\gamma^{max}$ on their overhead.  Reported are: the number of physical qudits used $N_j$, the number of logical qudits encoded $K_j$, and a lower bound on the code distance $D_j^{min}$, and the encoding rate $\mathcal{R}$.
\label{tab:bestcodes}}
\centering
\begin{tabular}{c| c| c c c}
$r$ & $j$ & $[[\mathcal{N}_j,\mathcal{K}_j,
\mathcal{D}^{min}_j]]$ &$\mathcal{R}$& $\gamma^{max}$ \\
\hline\hline
\multirow{5}{*}{8}
 & 0 & [[42,14,6]] & 0.333 & 0.613 \\
 & 1 & [[422,26,22]] & 0.062 & 0.902 \\
 & 2 & [[3416,168,104]] & 0.049 & 0.649 \\
 & 3 & [[27851,821,459]] & 0.030 & 0.575 \\
 & 4 & [[222850,6526,2818]] & 0.029 & 0.444 \\
\hline
\multirow{5}{*}{16}
 & 0 & [[188,52,18]] & 0.277 & 0.445 \\
 & 1 & [[3361,479,161]] & 0.143 & 0.383 \\
 & 2 & [[53447,7993,1767]] & 0.150 & 0.254 \\
 & 3 & [[855756,127284,21196]] & 0.149 & 0.191 \\
 & 4 & [[13632492,2096148,271852]] & 0.154 & 0.150 \\
\hline
\multirow{5}{*}{32}
 & 0 & [[708,284,48]] & 0.401 & 0.236 \\
 & 1 & [[24072,7672,968]] & 0.319 & 0.166 \\
 & 2 & [[762355,253453,21043]] & 0.333 & 0.111 \\
 & 3 & [[24292183,8213673,506711]] & 0.338 & 0.083 \\
 & 4 & [[774113521,266073871,12914929]] & 0.344 & 0.065 \\
\hline\hline
\end{tabular}
\end{table}

\section{State Reduction}
\label{sec:msd}

In Thm.~\ref{thm:quantum-codes}, we have constructed good quantum codes allowing constant-overhead MSD on qudits of dimension $q = r^2$, where $r \geq 8$ is a power of two. 
What remains is to construct a protocol yielding a magic state for an arbitrary $q=2^n$. Following Ref.~\cite{Wills:2024wid}, we treat each $r^2$-qudit as two $r$-qudits. Then, for a $r \geq 8$, we use our codes to distill a $\ket{CCZ}_{r^2}$ state. This state can reduced with a constant-depth Clifford circuit into a $\ket{CCZ}_r$. Below, we derive this state reduction procedure and show how any $\ket{CCZ}_{2^n}$ may be obtained with this procedure.

\begin{lemma}
    \label{lemma:state-reduction}
    Given a state $\ket{CCZ}_{r^2}$, one can obtain the state $\ket{CCZ}_r$ using a Clifford circuit (Fig~\ref{fig:state-reduction}) consisting of $4$ single-qudit gates and $3$ two-qudit gates.
\end{lemma}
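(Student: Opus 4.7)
The plan is to work in the tensor decomposition that views each $r^2$-qudit as a pair of $r$-qudits. Fix an $\mathbb{F}_r$-basis $\{1,\omega\}$ of $\mathbb{F}_{r^2}$ where $\omega^2+\omega+c=0$ is the minimal polynomial of $\omega$, with $\operatorname{Tr}_{\mathbb{F}_r/\mathbb{F}_2}(c)=1$ to ensure irreducibility. Writing each input label as $x_i=u_i+v_i\omega$ with $u_i,v_i\in\mathbb{F}_r$, the computational basis state $\ket{x_i}_{r^2}$ becomes $\ket{u_i}_r\ket{v_i}_r$, so that $\ket{+}_{r^2}^{\otimes 3}=\ket{+}_r^{\otimes 6}$. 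Using the tower $\operatorname{Tr}_{\mathbb{F}_{r^2}/\mathbb{F}_2}=\operatorname{Tr}_{\mathbb{F}_r/\mathbb{F}_2}\circ\operatorname{Tr}_{\mathbb{F}_{r^2}/\mathbb{F}_r}$ together with $\operatorname{Tr}_{\mathbb{F}_{r^2}/\mathbb{F}_r}(a+b\omega)=b$ (since $\omega+\omega^r=1$ in characteristic two) and $\omega^2=\omega+c$, a direct expansion of $x_1x_2x_3$ yields the key three-term identity
\begin{equation*}
\operatorname{Tr}_{\mathbb{F}_{r^2}/\mathbb{F}_r}(x_1x_2x_3)=u_1u_2u_3+(u_1+v_1)(u_2+v_2)(u_3+v_3)+c\,v_1v_2v_3.
\end{equation*}
This identity is the heart of the argument: $\ket{CCZ}_{r^2}$, viewed as a six-qudit $r$-qudit state, is obtained by applying three commuting $CCZ_r$-type diagonal gates to $\ket{+}_r^{\otimes 6}$ -- one acting on the $u$-register, one on the ``diagonal'' combinations $u_i+v_i$, and one on the $v$-register with scalar weight $c$.

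The reduction then proceeds by a short Clifford circuit that converts the two undesired cubic factors into structure removable by measurement of the $v$-qudits. Concretely, I would apply three CSUM (CNOT-type) gates -- one between the two $r$-qudits of each input $r^2$-qudit -- followed by a Hadamard on each of the three $v$-qudits, and then measure those three $v$-qudits in the computational basis. The CSUMs permute the three directions $u$, $v$, $u+v$ appearing in the identity above, and after the Hadamards the $v$-measurement collapses the $v$-register in such a way that the induced phases on the remaining $u$-qudits are at most quadratic; such phases are Clifford-correctable by single-qudit $Z$ and two-qudit $CZ$ byproducts classically conditioned on the outcomes, and hence not counted among the unitary gates of the circuit. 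A single multiplication gate $M_{(1+c)^{-1}}$ on one $u$-qudit then rescales the surviving scalar from $1+c$ to $1$, yielding $\ket{CCZ}_r$ on the three $u$-qudits. The unitary tally is three two-qudit CSUMs plus four single-qudit gates (three Hadamards and one multiplication), matching the claim.

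The main obstacle will be two tight algebraic checks. The first is verifying that after the CSUM and Hadamard conjugations the post-measurement residues on the $u$-register really lie in the Clifford group; this reduces to tracking how the three cubic factors in the identity combine, and in particular confirming that no surviving cubic-in-$u$ byproduct appears for generic $v$-outcomes other than the desired $(1+c)u_1u_2u_3$. The second is confirming $1+c\in\mathbb{F}_r^\times$, so that $M_{(1+c)^{-1}}$ is well-defined; this is secured for $r\ge8$ by the freedom to choose $c\in\mathbb{F}_r$ with $\operatorname{Tr}_{\mathbb{F}_r/\mathbb{F}_2}(c)=1$ together with $c\ne1$, which is always available in that regime.
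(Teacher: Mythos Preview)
Your three-term trace identity is correct and is the polynomial-basis $\{1,\omega\}$ analogue of the paper's normal-basis identity. The gap is in the circuit: inserting Hadamards before the computational measurement---equivalently, measuring the second register in the $X$-basis---does not leave a phase state on the $u$-register. After your CSUMs the phase polynomial still carries a genuine cubic $(1+c)w_1w_2w_3$ in the register $w$ you measure, so the post-measurement amplitude on $\ket{u}$ is the cubic exponential sum $A_b(u)=\sum_{w}(-1)^{\operatorname{Tr}(Q(u,w)+b\cdot w)}$, and its \emph{magnitude} depends on $u$. Concretely, with $b=0$ one computes $|A_0(0,0,0)|=2r^{2}-r$ but $|A_0(1,1,0)|=r(r-1)$. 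The resulting $u$-state is therefore not a uniform-amplitude phase state and cannot be Clifford-corrected to $\ket{CCZ}_r$; the assertion that ``the induced phases on the remaining $u$-qudits are at most quadratic'' is exactly where the argument fails.

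The remedy is to drop both the Hadamards and the CSUMs and simply measure the three $u$-qudits in the computational basis. Your own identity then shows the surviving phase on $v$ is $(1+c)v_1v_2v_3$ plus six outcome-dependent mixed terms (three quadratic, three linear in $v$). This is precisely what the paper does, just in the normal basis $\{\theta,\theta^r\}$, and the paper \emph{counts} those outcome-dependent corrections in the $4+3$ tally: three two-qudit $CZ$ gates remove the quadratic byproducts, three single-qudit $Z$ gates remove the linear ones, and one multiplication gate rescales the scalar. Your accounting, which reaches $4+3$ via CSUMs and Hadamards while declaring the byproduct corrections free, both relies on a step that fails and misreads where the gate count in the lemma actually comes from.
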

% The full proof of this result is provided in App.~\ref{app:state-reduction}. Here, we briefly explain the reasoning. 
%
\begin{proof}
We choose a normal basis $\{\theta, \theta^r\}$ of $\mathbb{F_{r^2}}$ over $\mathbb{F}_r$, which can always be chosen such that $\theta + \theta^r = 1$. In this way, we can write
\begin{align}
    x &= x_0\theta + x_1\theta^r,\; y = y_0\theta + y_1\theta^r,\text{ and }
    z = z_0 \theta + z_1\theta^r
\end{align}
where $x_i, y_i, z_i \in \mathbb{F}_r$. 
Then, we can exploit the composition of field traces
\begin{align}
    \operatorname{Tr}_{\mathbb{F}_{r^2}/\mathbb{F}_2}(xyz) = \operatorname{Tr}_{\mathbb{F}_{r}/\mathbb{F}_2} \circ \operatorname{Tr}_{\mathbb{F}_{r^2}/\mathbb{F}_r}(xyz).
\end{align}
to rewrite 
\begin{align}
    &\ket{CCZ}_{r^2} \notag\\&\propto \sum_{x,y,z} (-1)^{\operatorname{Tr}_{\mathbb{F}_{r}/\mathbb{F}_2}\circ\operatorname{Tr}_{\mathbb{F}_{r^2}/\mathbb{F}_r}(xyz)} \ket{x_0,y_0,z_0,}_{r}\ket{x_1,y_1,z_1}_{r}.
\end{align}
and evaluate only the first trace, 
\begin{align}
    \operatorname{Tr}_{\mathbb{F}_{r^2}/\mathbb{F}_r}(xyz) &= \gamma(x_0y_0z_0 + x_1y_1z_1) + \eta \sum\limits_{i+j+k = 1,2} x_iy_jz_k\notag\\
    &\text{where }\gamma:=1+\theta^{r+1}, \: \eta:=\theta^{r+1}. 
\end{align}
Thus,
while the measurements in Fig.~\ref{fig:state-reduction} fix the values of $x_0,  y_0$, and $z_0$, the Clifford corrections cancel all phases proportional to $\eta$, so only the phase $\gamma x_1 y_1 z_1$ remains with the phase $\gamma x_0 y_0 z_0$ now a global one. 
These steps produce the state $\ket{CCZ^\gamma}_r$. 
Then, applying to register $\ket{x_1}_r$ the Clifford gate $M_\gamma$ given by
\begin{align}
    M_\gamma \ket{v}_r = \ket{\gamma v}_r \text{ for } v \in \mathbb{F}_r,
\end{align}
we recover the $\ket{CCZ}_{r}$ state.
\end{proof}

\begin{figure}
    \centering
    \includegraphics[width=\linewidth]{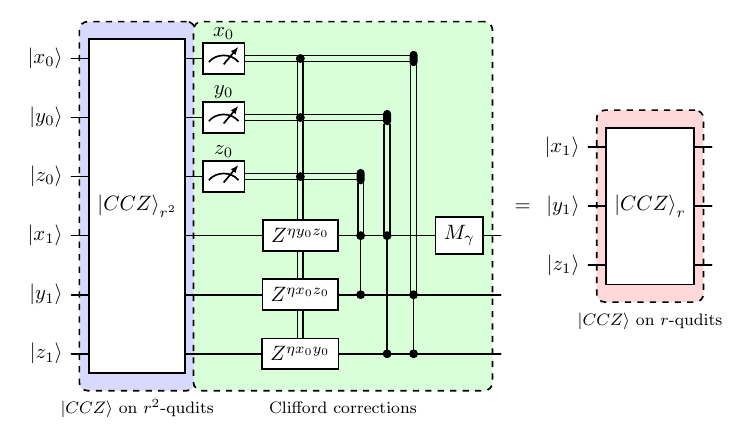}
    \caption{Reduction of state $\ket{CCZ}_{r^2}$ to a state $\ket{CCZ^{\gamma}}_r$ where $\gamma := 1 + \eta$, $\eta := \theta^{r + 1}$ and $\{\theta, \theta^r\}$ is a normal basis of $\mathbb{F}_{r^2}$ over $\mathbb{F}_r$ such that $\theta + \theta^r = 1$.}
    \label{fig:state-reduction}
\end{figure}

\begin{theorem}
    Suppose the state $\ket{CCZ}_{2^{2m}}$ for any $m \geq 3$ can be distilled (at constant space overhead with the codes of Thm.~\ref{thm:quantum-codes}). Then, any $\ket{CCZ}_{2^n}$ can be obtained with a constant-depth Clifford circuit. 
\end{theorem}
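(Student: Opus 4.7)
The plan is to iteratively apply the state-reduction procedure of Lemma~\ref{lemma:state-reduction} starting from a distilled $\ket{CCZ}_{2^{2m}}$ for some suitably chosen $m \geq 3$. Each application of the lemma halves the exponent in the qudit dimension, taking $\ket{CCZ}_{2^{2s}} \mapsto \ket{CCZ}_{2^s}$, at the cost of $3$ computational basis measurements, $4$ single-qudit Clifford gates, and $3$ two-qudit Clifford gates. Since $\ket{CCZ}_{2^{2m}}$ is available by hypothesis for every $m \geq 3$, and since the reduction lemma itself places no lower bound on $r$ (it only invokes a normal basis $\{\theta,\theta^r\}$ of $\mathbb{F}_{r^2}/\mathbb{F}_r$ with $\theta+\theta^r=1$, which exists for every finite extension of finite fields), the protocol is Clifford and has constant depth at every intermediate step, regardless of how small the target qudit becomes.

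To handle an arbitrary target $n \geq 1$, I would choose the number of reduction steps $k$ and the starting parameter $m$ so that $k$ consecutive halvings land on $n$, that is, $m = 2^{k-1} n$ subject to $m \geq 3$. Concretely: for $n \geq 3$ take $k=1$ and $m=n$; for $n=2$ take $k=2$ and $m=4$; for $n=1$ take $k=3$ and $m=4$. This covers every $n \geq 1$ with at most three reduction steps. At each intermediate step the output $\ket{CCZ^{\gamma}}_{2^{s}}$ is converted into the canonical $\ket{CCZ}_{2^{s}}$ by the Clifford multiplier $M_\gamma$ built into the lemma, so the next halving can be applied without modification.

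Finally, I would tally resources in the worst case, $n=1$: three invocations of Lemma~\ref{lemma:state-reduction} contribute a total of $3\times3 = 9$ measurements, $3\times4 = 12$ single-qudit Clifford gates, and $3\times3 = 9$ two-qudit Clifford gates. For $n \geq 2$ strictly fewer resources are needed, so the uniform bound stated in the theorem follows. Since each reduction is constant depth and only three are composed in sequence, the overall circuit depth is also constant.

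I do not anticipate a serious obstacle: the lemma supplies an exact halving, and the combinatorial observation that at most three halvings suffice is immediate. The only point that requires care is verifying that the normal-basis hypothesis of Lemma~\ref{lemma:state-reduction} genuinely holds at every step, including the small cases $r=2$ and $r=4$ encountered when producing $\ket{CCZ}_{2}$ or $\ket{CCZ}_{4}$; this is standard for finite-field extensions, so the iteration remains well defined throughout.
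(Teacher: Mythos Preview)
Your proposal is correct and essentially matches the paper's proof: both iterate Lemma~\ref{lemma:state-reduction} at most three times from a suitably chosen $\ket{CCZ}_{2^{2m}}$ with $m\geq 3$, arriving at the same worst-case count of $9$ measurements, $12$ single-qudit, and $9$ two-qudit Clifford gates. The only cosmetic difference is the case split---the paper observes that even $n\geq 6$ needs zero reductions by taking $m=n/2$ and distilling $\ket{CCZ}_{2^n}$ directly, whereas you uniformly use one reduction for all $n\geq 3$---but this does not affect the bound.
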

\begin{proof}

We have a few cases to consider. 
\begin{enumerate}
        \item $n$ is even and $n \geq 6$: set $r = 2^{n/2} \geq 8$ and use our codes to distill $\ket{CCZ}_{2^n}$. 
        \item $n>2$ is odd or $n=4$: set $r = 2^n \geq 8$ and distill $\ket{CCZ}_{r^2}$. Then, Fig.~\ref{fig:state-reduction} is used to obtain $\ket{CCZ}_{2^n}$.
    \item $n \leq 2$: set $r = 16$ and distill $\ket{CCZ}_{r^2}$. Then, Fig.~\ref{fig:state-reduction} is used twice to obtain $\ket{CCZ}_{4}$ or thrice for $\ket{CCZ}_{2}$. 
\end{enumerate}
Thus the reduction circuit is used at most thrice.
As a result, any $\ket{CCZ}_{2^n}$ gate may be obtained with at most 12 single-qudit and 9 two-qudit Clifford gates. 
\end{proof}
This procedure generalizes the result of Ref.~\cite{Wills:2024wid} where it was shown how a non-Clifford single-qudit gate on $2^{10}$-qudit can be converted to qubit $\ket{CCZ}$ gates. 

\section{Conclusions}
\label{sec:conclusion}

In this work, we have presented explicit constructions of low-overhead magic state distillation protocols for qudits with dimensions $2^{2m}$ with $m\geq3$ using triorthogonal codes that are asymptotically good.  In this, we have identified a new family of triorthogonal codes that satisfy the TVZ bound through a lifting method. Of the codes identified, some can be implemented on $d=2^6$ qudits -- the same dimension as in Ref.~\cite{Nguyen:2024qwg} and smaller than in Ref.~\cite{Wills:2024wid}.  These are attractive design targets for qudit-based platforms like SRF-cavities where single $d\leq100$ qudits have been demonstrated at varying levels of sophistication~\cite{Deng:2023uac,Kim:2025ywx} including a $[[42,14,6]]_{64}$ code.

Looking beyond these results, there are a number of interesting avenues of research.  First, it is critical to identify within our family of codes those best suited for current and future qudit-based platforms in more detail.  Numerics should be employed to determine the distillable states and regions with their thresholds.  Further, continued exploration of the properties of single codes or families with smaller $2^n$-dimension along the lines pursued for prime dimensions in~\cite{Saha:2025frb} would be valuable.  While they do not belong to a family of codes, we can construct a $[[25,7,\geq6]]_{32}$ and a $[[7,1,3]]_8$ codes with transversal CCZ gate, enabling protocols with low space footprint albeit higher overhead which are potentially practical on qudit-based platforms or even on qubit-based platforms with a small constant factor overhead. 
Another open question is whether qLDPC codes for qudits~\cite{Aly:2008tcy,Aly2008Families,10821254,Vasic:2025obi} can have low overhead.  Finally, recent work has suggested that mildly relaxing the requirement for asymptotically good codes can improve the locality of the weight checks~\cite{Golowich:2025utp}.

\begin{acknowledgments}
The authors thank Erik Gustafson, Di Fang, Yu Tong, and Narayanan Rengaswamy for invaluable comments in the process of this work. H.L. and E.M. were supported by the U.S. Department of Energy, Office of Science, National Quantum Information Science Research Centers, Superconducting Quantum Materials and Systems Center (SQMS), under Contract No. 89243024CSC000002. Fermilab is operated by Fermi Forward Discovery Group, LLC under Contract No. 89243024CSC000002 with the U.S. Department of Energy, Office of Science, Office of High Energy Physics. D.L. acknowledges support from the U.S. Department of Energy (DOE) under Contract No. DE-AC02-05CH11231, through the Office of Advanced Scientific Computing Research Accelerated Research for Quantum Computing Program, MACH-Q project. SZ acknowledges support from the U.S. Department of Energy, Office of Science, Accelerated Research in Quantum Computing Centers, Quantum Utility through Advanced Computational Quantum Algorithms, grant No. DE-SC0025572.
M.J.C.~is supported by the U.S.~Department of Energy Grant No.~DE-FG02-97ER-41014 (U.W.~Nuclear Theory). 
\end{acknowledgments}

\bibliography{msdbibo}

\appendix

\section{\texorpdfstring{Canonical divisor $(\eta_0)$}{Canonical divisor eta0}}
\label{app:divisor}

In this section, we compute the canonical divisor $(\eta_0)$ for 
\begin{align}
    \eta_0 := \frac{dz_0}{z_0} && \text{ with } z_0 := \prod\limits_{\alpha \in \mathcal{Z}} (x_0 + \alpha),
\end{align}
where $Z$ are the rational places in $F_0=\mathbb{F}_q(x_0)$. 
To perform this computation, we observe that for a finite field $\mathbb{F}_q$, 
\begin{align}
    \prod_{\alpha\in \mathbb{F}_q} (x - \alpha) = x^q - x
\end{align}
and that $\mathbb{F}_{r^2} \backslash \mathcal{Z} = \mathbb{F}_r$. Then, noting that our fields have characteristics 2, we find that 
\begin{align}
    z_0 &= \frac{x^{r^2}_0 + x_0}{x^r_0 + x_0}.
\end{align}
where we have used the fact that $\mathbb{F}_{r^2}\backslash \mathcal{Z} = \mathbb{F}_{r}$. Computing the differential $dz_0$,
\begin{align}
    dz_0 &= \frac{x^{r^2}_0 + x^r_0}{(x^r_0+x_0)^2} dx_0\notag\\
    &= \frac{(x^r_0 + x_0)^r}{(x^r_0+x_0)^2} dx_0\notag\\
    &= (x^r_0 + x_0)^{r-2} dx_0\notag\\
    &= \prod\limits_{\alpha \in \mathbb{F}_r} (x_0 + \alpha)^{r-2} \; dx_0.
\end{align}
So far, we have used only ordinary calculus and recognized terms multiplied by $r$ to be trivial. Putting the canonical divisor together, 
\begin{align}
    \eta_0 &= \frac{dz_0}{z_0} =  \frac{\prod\limits_{\alpha \in \mathbb{F}_r} (x_0 + \alpha)^{r-2}}{\prod\limits_{\alpha \in \mathcal{Z}} (x_0 + \alpha)} dx_0,
\end{align}
it is clear now that
\begin{itemize}
    \item The places $P_\alpha$ for $\alpha \in \mathcal{Z}$ are simple poles of $\eta_0$ with multiplicities 1. 
    \item Each place $Q_\alpha$ for $\alpha \in \mathbb{F_r}$ is a zero of $\eta_0$, each with multiplicity $r - 2$. 
    \item The place at infinity $Q_\infty$ has valuation 
    \begin{align}
        v_{Q_\infty}(\eta_0) &= (r^2- r) - r(r-2) - 2\notag\\
        &= r - 2.
    \end{align}
\end{itemize}
Thus, the divisor $(\eta_0)$ is
\begin{align}
    (\eta_0) &= -D_0 + (r-2) \sum\limits_{\alpha \in \mathbb{F_r}} Q_\alpha + (r-2)Q_\infty\notag\\
    &= -D_0 + (r-2) \sum\limits_{Q \in \mathcal{V}} Q.
\end{align}

\section{Alternative Lifting Procedure}
\label{app:alt-lift}

In Sec.~\ref{sec:classical-code}, we demonstrate how a base AG code may be lifted repeatedly to generate a family of AG codes explicitly with good parameters. 
Here, we outline a similar lifting method, which offers some improvements upon these parameters, although their asymptotic behavior is the same. 

This alternative lifting method is similar to that of Ref.~\cite{Chara2024}. 
As in our earlier method, the definition and the lifting of our first divisor $D_j$ remain the same. 
Instead, we may lift the $G_j$ divisor by 
\begin{align}
    G_{j+1} := \operatorname{Con}_{F_{j+1}/F_j} (G_j) + \Delta(F_{j+1}/F_j)
\end{align}
where
\begin{align}
    \Delta(F_{j+1}/F_j) :=  \sum\limits_{P \in \mathbb{P}(F_j)} \sum\limits_{P'|P} \left\lfloor \frac{d(P'|P)}{3} \right\rfloor P'
\end{align}
and $d(P'|P)$ is the different exponent of $P'|P$, defined as follows:
\begin{definition}[Different exponent]
    The different exponent of a place %with respect to an extension 
        $P'|P$ 
        is $d(P'|P)=-v_{P'}(t)\geq0$ for $t$ forming the complementary module $t O_P'=C_P:=\{x\in F'\mid \mathrm{Tr}_{F'/F}(xO_P^\prime)\subseteq\mathcal{O}_P\}$ of the integral closure $O_P':=\bigcap_{P'|P}\mathcal{O}_{P'}$ of $O_P$.
\end{definition}
The divisors $D_{j+1}$ and $G_{j+1}$ are disjoint by construction, because a place $P' \in \operatorname{supp}D_{j+1}$ has $d(P'|P) = e(P'|P) - 1 = 0$ by Dedekind's different formula (Thm. 3.5.1 in Ref.~\cite{Stichtenoth2009}). 
Therefore, this lifting provides a valid AG code $C_{j+1} := C_\mathcal{L}(D_{j+1}, G_{j+1})$. 
Crucially, it is also triorthogonality-preserving since
\begin{align}
    3 \Delta(F_{j+1}/F_j) \leq \operatorname{Diff}(F_{j+1}/F_j),
\end{align}
where the different divisor Diff is related to the different exponent via the definition: 
% \begin{definition}[Different divisor]
    % The different divisor over a function field extension $F'/F$ is 
    \begin{equation}
        \mathrm{Diff}(F'/F):=\sum_{P}\sum_{P'|P} d(P'|P)  P'.
    \end{equation}
% \end{definition}

% This lifting improves the dimension of the code compared to the previous lifting in Def.~\ref{def:lifting}. However, for the family of codes we will build, this improvement is be sub-leading and consequently will not affect the asymptotic behavior of our codes. 
% Therefore, we proceed in our analysis with the simpler lifting. 

\begin{comment}
\section{\texorpdfstring{Proof of $\deg G_j > 2g_j - 2$ in the tower $\mathcal W$}{Proof of deg G_j > 2g_j - 2 in the tower W}}
\label{app:hank-lemma}

For the case $j$ is odd, we have
\begin{align}
    \operatorname{deg}&G_j - 2g_j + 2\notag\\
    &= r^j(r+1)\left\lfloor \frac{r-2}{3} \right\rfloor - 2r^{j+1} + 4r^{(j+1)/2}\notag\\
    &= r^j\left( (r+1) \left\lfloor \frac{r-2}{3} \right\rfloor -2r\right) + 4r^{(j+1)/2} \notag\\ &> 0 \text{ for } r \geq 8.
\end{align}
In the case $j$ is even, we obtain
\begin{align}
    \operatorname{deg}&G_j - 2g_j + 2\notag\\
    &= r^j(r+1)\left\lfloor \frac{r-2}{3} \right\rfloor - 2r^{j+1} + 2r^{j/2}(r+1) \notag\\
    &= r^j\left( (r+1) \left\lfloor \frac{r-2}{3} \right\rfloor -2r\right) + 2r^{j/2}(r+1) \notag\\ &> 0 \text{ for } r \geq 8.
\end{align}

% For the case $j$ is odd, we have
% \begin{align}
%     &\operatorname{deg}(G_j) -2g_j + 2\notag\\ &= r^{j+1} \left\lfloor \frac{r-2}{3} \right\rfloor - 2\left(r^{\frac{j+1}{2}} - 1\right)^2 + 2\notag\\
%     &=  r^{j+1} \left\lfloor \frac{r-2}{3} \right\rfloor - 2\left[\left(r^{\frac{j+1}{2}} - 1\right)^2 - 1\right]\notag\\
%     &= r^{j+1} \left\lfloor \frac{r-2}{3} \right\rfloor - 2 \left(r^{\frac{j+1}{2}} - 2\right) r^{\frac{j+1}{2}}\notag\\
%     &= r^{j+1}\left(\left\lfloor \frac{r-2}{3} \right\rfloor - 2\right) + 4r^{\frac{j+1}{2}} > 0 \text{ for } r \geq 8.
% \end{align}
%  In the case $j$ is even, we obtain
%  \begin{align}
%      &\operatorname{deg}(G_j) -2g_j + 2\notag\\ &= r^{j+1} \left\lfloor \frac{r-2}{3} \right\rfloor -2 \left[\left(r^\frac{j}{2} - 1\right)\left(r^{\frac{j+2}{2}} - 1\right) - 1\right]\notag\\
%      &= r^{j+1} \left\lfloor \frac{r-2}{3} \right\rfloor -2 \left[r^{j+1} - r^{j/2} - r^{\frac{j+2}{2}}\right]\notag\\
%      &= r^{j+1}\left(\left\lfloor \frac{r-2}{3} \right\rfloor - 2\right) + 2r^{j/2}(1 + r)\notag\\ &> 0 \text{ for } r \geq 8.
%  \end{align}
Hence, in both cases, $\operatorname{deg} G_j > 2g_j - 2$.
\end{comment}

\section{Proof of Good Quantum Codes' Parameters (Thm.~\ref{thm:quantum-codes}) }
\label{app:quantum-codes}
To prove this theorem, we will explicitly derived the quantum codes from our triorthogonal codes following the discussion in Sec.~\ref{sec:quantum-codes}. First, we note that Eq.~\eqref{eq:quantum-dist-bound1},
\begin{align}
    \label{eq:quantum-dist-bound2}
    \mathcal{D}_j \geq \operatorname{deg}G_j - \mathcal{K}_j - (2g_j - 2)
\end{align}
can be verified simply by comparing the lower bounds on $\mathcal{D}_X$ and $\mathcal{D}_Z$. Then, in light of Eq.~\eqref{eq:quantum-dist-bound2}, we must ensure
\begin{align}
    \label{eq:Kj-cond}
    \mathcal{K}_j &\leq \operatorname{deg}(G_j) + 2(1 - g_j),
    % &= r^{j+1}\left( \left\lfloor \frac{r-2}{3}\right\rfloor - 2 \right) + v(r,j)
\end{align}
which we may achieve simply by requiring
\begin{align}
    \mathcal{K}_j < r^j\left( (r+1) \left\lfloor \frac{r-2}{3} \right\rfloor -2r\right) + v,
\end{align}
where $v(r,j)$ is a positive function given by
\begin{align}
    v(r,j) := \begin{cases}
        4 r^{\frac{j+1}{2}}\, \text{ if $j$ is odd}\notag\\
        2r^{\frac{j}{2}} \left(1 + r\right)\, \text{ otherwise}
    \end{cases}.
\end{align}
(Recall here that $\mathrm{deg}G_j = r^j\mathrm{deg}G_0 = r^j (r+1) \left\lfloor \frac{r-2}{3} \right\rfloor$, and our genus $g_j$ is given in terms of our dimension $r$ by Eq.~\eqref{eq:1-genus}.) 
% For $r = 8$, $\mathcal{K}_j \leq v(r,j)$. So, the codes will not be asymptotically good since the length will be $N_j = r^{j+1}(r-1) - K_j = \mathcal{O}(r^{j+2})$. Therefore, to obtain good codes, we require $r \geq 16$. 
Fixing $r \geq 8$, our strategy is to equivalently choose 
\begin{align}
    \mathcal{K}_j = x_1 \, r^j \left( (r+1) \left\lfloor \frac{r-2}{3} \right\rfloor -2r\right) + x_2 \, v(r,j),
\end{align}
where $x_1,x_2$ are constants such that $0 < x_1 < 1 $, $0\leq x_2 \leq  1$ and that $\mathcal{K}_j < k_j$ is an integer. Clearly, such constants always exist for $r \geq 8$, e.g. $x_1 = 1/2$ and $x_2 = 0$.
% Moreover, Eq.~\ref{eq:Kj-cond} is clearly satsfied since $\lfloor (r-2)/3\rfloor \geq (r-4)/3$. 

Having chosen $\mathcal{K}_j$, it follows then that
\begin{align}
    \mathcal{N}_j &= n_j - \mathcal{K}_j\notag\\
    &= r^j \left[r^2 - r - x_1  \left( (r+1) \left\lfloor \frac{r-2}{3} \right\rfloor -2r\right) \right]\notag\\ &\;\;\;\;\; - x_2\,v(r,j)
\end{align}
and the distance satsifies
\begin{align}
    \mathcal{D}_j &\geq (1-x_1)\, r^j \, \left( (r+1) \left\lfloor \frac{r-2}{3} \right\rfloor -2r\right)\notag\\ &\phantom{xx}+ (1 - x_2)\,v(r,j)
\end{align}
completing the proof. Note that the constants $x_1$ and $x_2$ have different roles. For example, $x_1 \notin \{0,1\}$ guarantees that both the distance and the dimension are $\Theta(r^{j+2})$. The constant $x_2$, however, may be chosen to either increase the distance or the dimension.

\end{document}